\documentclass[10pt, a4paper]{article}     
\usepackage{amsmath,amssymb,amsfonts} 
\usepackage{amsthm}
\usepackage[latin1]{inputenc}
\usepackage{graphics} 
\usepackage{graphicx}                
\usepackage{color}                    
\usepackage{hyperref}                 
\usepackage{lineno}

\oddsidemargin 0cm
\evensidemargin 0cm


\textwidth 15.5cm
\topmargin -1cm
\parindent 0cm
\textheight 24cm
\parskip 1mm

\theoremstyle{plain}
\newtheorem{thm}{Theorem}[section]
\newtheorem{lem}[thm]{Lemma}

\newtheorem{prop}[thm]{Proposition}
\theoremstyle{definition}
\newtheorem{defi}[thm]{Definition}

\theoremstyle{remark}

\newtheorem{rem}[thm]{Remark}

\renewenvironment{proof}{\textsc{Proof:}}{\qed}


\title{A probabilistic-numerical approximation for an obstacle problem arising in game theory}

\author{Christine Gr\"un \footnote{Laboratoire de Mathematiques de Brest
UMR 6205, 6 avenue Le Gorgeu CS 93837, 29238 BREST cedex 3, France; email: christine.gruen@univ-brest.fr.}
\footnote{Supported by the Marie Curie Initial Training Network (ITN) project: ``Deterministic and Stochastic Controlled Systems and
Application", FP7-PEOPLE-2007-1-1-ITN, No. 213841-2.}
\footnote{Ce travail a b\'en\'efici\'e une aide de l'Agence Nationale de la Recherche portant la r\'ef\'erence ANR-10-BLAN 0112}
}

\begin{document}
\maketitle
\begin{abstract}
We investigate a two-player zero-sum stochastic differential game in which one of the players has more information on the game than his opponent. We show how to construct numerical schemes for the value function of this game, which is given by the solution of a quasilinear partial differential equation with obstacle.
\end{abstract}

\emph{Keywords.} Stochastic Differential Games, Information Incompleteness, Viscosity Solutions, Approximation\\

\textit{2000 AMS subject classification}: 91A15, 49N70, 49L25, 65C30

\section{Introduction}
In 1967 Aumann and Maschler presented their celebrated model for games with incomplete information, see \cite{AuMaS} and references therein. The game they consider consists in a set of, say $I$, standard discrete time two person zero-sum games. At the beginning one of these zero-sum games is picked at random according to a probability $p$. The information which game was picked is transmitted to Player 1 only, while Player 2 just knows $p$. It is assumed that both players observe the actions of the other one, so Player 2 might infer from the actions of his opponent which game is actually played. It turns out that it is optimal for the informed player to play with an additional randomness. Namely in a such a way, that he optimally manipulates the beliefs of the uninformed player.

The extension to two-player zero-sum stochastic differential games has recently been given by Cardalia- guet and Rainer in \cite{CaRa2}, \cite{Ca}, where the value function is characterized by the unique viscosity solution of a Hamilton Jacobi Isaacs (HJI) equation with an obstacle in the form of a convexity constraint in $p$.  The HJI equation without obstacle is the one which is also found to characterize stochastic differential games in the classical work of Fleming and Souganidis \cite{FS}. The probability $p$ appears as an additional parameter in which the value function has to be convex. 

In Cardaliaguet \cite{Carda} an approximation scheme for the value function of deterministic differential games with incomplete information is introduced. An extension of \cite{Carda} to deterministic games with information incompleteness on both sides is given in the work of  Souquiere \cite{Sou}. We consider the case where the underlying dynamic is given by a diffusion with controlled drift but uncontrolled non-degenerate volatility. In constrast to \cite{Carda} and \cite{Sou} we can work on the problem under a Girsanov transform. This transform is a well known tool to consider stochastic games with complete information in the context of backward stochastic differential equations (BSDEs) (see Hamad\`ene and Lepeltier \cite{HaLe}). An approximation of the value function of a stochastic differential game via BSDEs has been discussed in Bally \cite{Ba}.  Different to \cite{Ba} our algorithm is closely related to the work of Barles and Souganidis \cite{BaSu} who consider monotone approximation schemes for fully nonlinear second order partial differential equations. The latter was also applied in the recent work of Fahim, Touzi and Warin \cite{FTW} where fully nonlinear parabolic PDEs are treated. As in \cite{FTW} we use a kind of finite difference scheme for the HIJ backwards in time and combine it with taking the convex hull in $p$ at each time step to capture the effect of the information incompleteness. Note that this rather direct ansatz using a probabilistic PDE scheme also significantly differs from the Makov chain approximation method for stochastic differential games described in Kushner \cite{Ku}.

From the very beginning of the investigation of BSDEs initiated by Peng in \cite{P} the close relationship with optimal control problems and quasilinear PDEs has been exploited.  Consequently, also the approximation of solutions to BSDEs and to quasilinear PDEs are closely related. For a survey on BSDEs we refer to El Karoui, Peng and Quenez \cite{ElK}, while a survey on the numerical approximation of BSDEs can be found in Bouchard, Elie and Touzi \cite{BuETou}. In this sense our result can also be interpreted as approximation of the solutions to the BSDEs which appear in the BSDE representation of the value function for stochastic differential games with incomplete information in \cite{CG}.

The outline of the paper is as follows. In section 2 we describe the game and restate the results of \cite{CaRa2} and \cite{Ca} which build the basis for our investigation. In section 3 we present the approximation scheme and give some regularity proofs. Section 4 is devoted to the convergence proof.

\section{Setup}

\subsection{Formal description of the game}

Let $\mathcal{C}([t_0,T];\mathbb{R}^d)$ be the set of continuous functions from $\mathbb{R}$ to $\mathbb{R}^d$, which are constant on $(-\infty,t_0]$ and on $[T,+\infty)$. We denote by $B_s(\omega_B)=\omega_B(s)$ the coordinate mapping on $\mathcal{C}([t_0,T];\mathbb{R}^d)$ and define $\mathcal{H}=(\mathcal{H}_s)$ as the filtration generated by $s \mapsto B_s$. We denote $\Omega_t=\{\omega\in\mathcal{C}([t,T];\mathbb{R}^d)\}$ and $\mathcal{H}_{t,s}$ the $\sigma$-algebra generated by paths up to time $s$ in $\Omega_t$.  Furthermore we provide $\mathcal{C}([t_0,T];\mathbb{R}^d)$ with the Wiener measure $\mathbb{P}^0$ on $(\mathcal{H}_s)$.\\
In the following we investigate a two-player zero-sum differential game starting at a time $t\geq t_0$ with terminal time $T$.  For any fixed initial data $t\in[t_0,T], x\in\mathbb{R}^d$ the two players control a diffusion on $(\mathcal{C}([t,T];\mathbb{R}^d),(\mathcal{H}_{t,s})_{s\in[t,T]}, \mathcal{H},\mathbb{P}^0)$ given by
\begin{eqnarray}
 dX^{t,x,u,v}_s=b(s,X^{t,x,u,v}_s,u_s,v_s)ds+\sigma(s,X^{t,x,u,v}_s)dB_s\ \ \ \ 
 X^{t,x}_{t}=x.
\end{eqnarray}
where we assume that the controls of the players $u$, $v$ can only take their values in some compact subsets of some finite dimensional spaces, denoted by $U$, $V$ respectively.\\
The aim of the game is to optimize
\begin{itemize}
	\item[(i)] running costs: $(l_i)_{i\in\{1,\ldots , I\}}:[t_0,T]\times\mathbb{R}^d\times U \times V \rightarrow \mathbb{R}$
	\item[(ii)] terminal payoffs: $(g_i)_{i\in\{1,\ldots , I\}}:\mathbb{R}^d\rightarrow\mathbb{R}$,
\end{itemize}
which are chosen according to a probability $p\in \Delta(I)$ before the game starts. At the beginning of the game this information is transmitted only to Player 1. We assume that Player 1 chooses his control to minimize, Player 2 chooses his control to maximize the expected payoff.  Furthermore we assume both players observe their opponents control. So Player 2, knowing only the probability $p_{i}$ for scenario $i\in\{1,\ldots , I\}$ at the beginning, will try to guess the missing information from the behavior of his opponent.\\
The following will be the standing assumption throughout the paper.\\
{\bf Assumption (A)}
\begin{itemize}
	\item[(i)] $b:[t_0,T]\times\mathbb{R}^d\times U \times V \rightarrow \mathbb{R}^d$ is bounded and continuous in all its variables and Lipschitz continuous with respect to $(t,x)$ uniformly in $(u,v)$.
	\item[(ii)] For $1\leq k,l\leq d$ the function $\sigma_{k,l}:[t_0,T]\times\mathbb{R}^d \rightarrow \mathbb{R}$ is bounded and Lipschitz continuous with respect to $(t,x)$. For any $(t,x)\in[0,T]\times\mathbb{R}^d$ the matrix $\sigma^*(t,x)$ is non-singular and $(\sigma^*)^{-1}(t,x)$ is bounded and Lipschitz continuous with respect to $(t,x)$.
	\item[(iii)] $(l_i)_{i\in I}:[t_0,T]\times\mathbb{R}^d\times U \times V \rightarrow \mathbb{R}$ is bounded and continuous in all its variables and Lipschitz continuous with respect to $(t,x)$ uniformly in $(u,v)$. $(g_i)_{i\in I}:\mathbb{R}^d \rightarrow \mathbb{R}$ is bounded and uniformly Lipschitz continuous.
	\item[(iv)] Isaacs condition: for all $(t,x,\xi,p)\in[t_0,T]\times\mathbb{R}^d\times\mathbb{R}^d\times\Delta(I)$ 
	\begin{equation}
	\begin{array}{rcl}
		&&\inf_{u\in U}\sup_{v\in V} \left\{\langle b(t,x,u,v),\xi\rangle+\sum_{i=1}^{I}p_il_i(t,x,u,v)\right\}\\
		\ \\
		&&\ \ \ \ =\sup_{v\in V} \inf_{u\in U}\left\{\langle b(t,x,u,v),\xi\rangle+\sum_{i=1}^{I}p_il_i(t,x,u,v)\right\}=:H(t,x,\xi,p).
	\end{array}
	\end{equation}
\end{itemize}

By assumption (A) the Hamiltonian $H$ is  Lipschitz continuous in $(\xi,p)$ uniformly in $(t,x)$ and Lipschitz continuous in $(t,x)$ with Lipschitz constant $c(1+|\xi|)$, i.e. it holds for all $t,t'\in[0,T]$, $x,x'\in\mathbb{R}^d$, $\xi,\xi'\in\mathbb{R}^d$, $p,p'\in\Delta(I)$
\begin{eqnarray}
|H(t,x,\xi,p)|\leq c (1+|\xi|)
\end{eqnarray}
and
\begin{eqnarray}
|H(t,x,\xi,p)-H(t',x',\xi',p')|\leq c (1+|\xi|)(|x-x'|+|t-t'|)+c|\xi-\xi'|+ c |p-p'|.
\end{eqnarray}

\subsection{Strategies and value function}

We now give the necessary definitions and the results of \cite{Ca} and \cite{CaRa2} on which we will base our investigation.

\begin{defi}
For any $t\in[t_0,T[$ an admissible control $u=(u_s)_{s\in[t,T]}$ for Player 1 is a progressively measurable process with respect to the filtration $(\mathcal{H}_{t,s})_{s\in[t,T]}$ with values in $U$.
The set of admissible controls for Player 1 is denoted by $\mathcal{U}(t)$.\\
The definition for admissible controls $v=(v_s)_{s\in[t,T]}$ for Player 2 is similar. The set of admissible controls for Player 2 is denoted by $\mathcal{V}(t)$.
\end{defi}


\begin{defi}
A strategy for Player 1 at time $t\in[t_0,T[$ is a map $\alpha:[t,T]\times\mathcal{C}([t,T];\mathbb{R}^d)\times L^0([t,T];V)\rightarrow U$ which is nonanticipative with delay, i.e. there is $\delta>0$ such that for all $s\in[t,T]$ for any $f,f'\in\mathcal{C}([t,T];\mathbb{R}^d)$ and $g,g'\in L^0([t,T];V)$ it holds: $f=f'$  and $g=g'$ a.e. on $[t,s]$ $\Rightarrow$ $\alpha(\cdot,f,g)=\alpha(\cdot,f',g')$ a.e. on  $[t,s+\delta]$. The set of strategies for Player 1 is denoted by $\mathcal{A}(t)$.\\
The definition of strategies $\beta: [t,T]\times\mathcal{C}([t,T];\mathbb{R}^d)\times L^0([t,T];U)\rightarrow V$ for Player 2 is similar. The set of strategies for Player 2 is denoted by $\mathcal{B}(t)$.
\end{defi}

With Definition 2.2. it is possible to prove via a fixed point argument the following Lemma, which is a slight modification of Lemma 5.1. in \cite{CaRa2}.
\begin{lem}
To each pair of strategies $(\alpha,\beta)\in\mathcal{A}(t)\times\mathcal{B}(t)$ one can associate a unique couple of admissible controls $(u,v)\in\mathcal{U}(t)\times\mathcal{V}(t)$, such that for all $\omega\in\mathcal{C}([t,T];\mathbb{R}^d)$
\[\alpha(s,\omega,v(\omega))=u_s(\omega)\ \ \ \ \textnormal{ \textit{and}}\ \ \ \ \beta(s,\omega,u(\omega))=v_s(\omega)\ .\]
\end{lem}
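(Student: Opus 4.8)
The plan is to construct the controls $(u,v)$ as a fixed point of the map that sends a pair of paths-feedback controls to the pair obtained by plugging them into the strategies $\alpha$ and $\beta$. More precisely, fix $\omega \in \mathcal{C}([t,T];\mathbb{R}^d)$ and consider the operator $\Phi_\omega$ on pairs of measurable control processes: given $(u,v) \in \mathcal{U}(t)\times\mathcal{V}(t)$ (or rather their restrictions to $[t,s]$), define $\Phi_\omega(u,v) = (s \mapsto \alpha(s,\omega,v), s\mapsto \beta(s,\omega,u))$. A pair $(u,v)$ satisfying the two displayed identities is exactly a fixed point of $\Phi_\omega$. The key structural feature making this work is the nonanticipativity with delay of both $\alpha$ and $\beta$: if $\delta>0$ is a common delay (take the minimum of the two delays), then the values of $\Phi_\omega(u,v)$ on $[t,t+\delta]$ depend only on $(u,v)$ restricted to $\{t\}$, hence are determined outright; and inductively, the values on $[t, t+(k+1)\delta]$ depend only on the values on $[t,t+k\delta]$.

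First I would set up the induction on the intervals $[t_0 + k\delta, t_0+(k+1)\delta] \cap [t,T]$, which is a finite partition since $T$ is finite. On the first interval $[t,t+\delta]$, by the delay property the maps $\alpha(\cdot,\omega,\cdot)$ and $\beta(\cdot,\omega,\cdot)$ do not actually see their control argument on $[t,t+\delta]$ (only its restriction to $[t,t]$, a null set), so $u_s := \alpha(s,\omega,v)$ and $v_s := \beta(s,\omega,u)$ are well-defined on $[t,t+\delta]$ regardless of how $u,v$ are extended; this pins down $(u,v)$ uniquely on $[t,t+\delta]$. Assuming $(u,v)$ has been uniquely determined on $[t,t+k\delta]$, I would argue that on $[t,t+(k+1)\delta]$ the values $\alpha(s,\omega,v)$ for $s \le t+(k+1)\delta$ depend only on $v|_{[t,t+k\delta]}$, which is already fixed; likewise for $\beta$. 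Hence $(u,v)$ is uniquely and consistently extended to $[t,t+(k+1)\delta]$. After finitely many steps this produces a pair $(u,v)$ on all of $[t,T]$.

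Two points then need to be checked to finish. First, measurability: one must verify that the resulting $u$ and $v$ are progressively measurable with respect to $(\mathcal{H}_{t,s})$, so that they genuinely lie in $\mathcal{U}(t)$ and $\mathcal{V}(t)$. This follows because on each block the control is obtained by composing the (measurable) strategy map with the coordinate process $\omega$ and the already-constructed control on the previous block, and a countable (here finite) gluing of progressively measurable pieces is progressively measurable; some care is needed to phrase $\alpha$ as jointly measurable in its arguments, but this is part of the definition of a strategy as used in \cite{CaRa2}. Second, uniqueness: since at each step of the induction the extension was forced, any two solutions must agree on $[t,t+k\delta]$ for every $k$, hence coincide on $[t,T]$.

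The main obstacle is the careful bookkeeping of the delay argument across the interval decomposition, in particular making sure that the delay $\delta$ can be chosen uniformly (as the minimum of the delays of $\alpha$ and $\beta$) and that ``nonanticipative with delay'' really does give the needed implication ``$(u,v)$ determined on $[t,t+k\delta]$ $\Rightarrow$ $\Phi_\omega(u,v)$ determined on $[t,t+(k+1)\delta]$'' — one has to track that the strategy reads its path argument $\omega$ only up to time $s$ and its control argument only up to time $s$, so that on $[t,t+(k+1)\delta]$ it sees the control only up to $t+k\delta$. The rest is the routine fixed-point/gluing argument and the measurability verification, both of which are essentially as in Lemma 5.1 of \cite{CaRa2} with the minor modification that here the noise $\omega = \omega_B$ enters directly (we are on Wiener space under $\mathbb{P}^0$) rather than through a controlled SDE solved simultaneously; this actually simplifies matters since $\omega$ is given a priori and we need not couple the fixed point for the controls with an existence-uniqueness argument for $X^{t,x,u,v}$.
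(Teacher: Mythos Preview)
Your proposal is correct and matches the paper's approach: the paper does not give a proof but simply states that the result follows ``via a fixed point argument'' as a slight modification of Lemma~5.1 in \cite{CaRa2}, which is exactly the delay-based inductive construction you outline. Your sketch in fact supplies the details the paper omits.
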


A characteristic feature of games with incomplete or asymmetric information is that the players have to find a balance between acting optimally according to their information and hiding it. To this end it turns out that he will give his behavior a certain additional randomness. This effect is captured in the following definition. 

\begin{defi}
A random strategy for Player 1 at time $t\in[t_0,T[$ is a a pair $((\Omega_\alpha,\mathcal{G}_\alpha,\mathbb{P}_\alpha),\alpha)$, where $(\Omega_\alpha,\mathcal{G}_\alpha,\mathbb{P}_\alpha)$ is a probability space in $\mathcal{I}$ and $\alpha: [t,T]\times\Omega_\alpha\times\mathcal{C}([t,T];\mathbb{R}^d)\times L^0([t,T]; V)\rightarrow U$ satisfies
\begin{itemize}
\item[(i)] 
 	$\alpha$ is a measurable function, where $\Omega_\alpha$ is equipped with the $\sigma$-field $\mathcal{G}_\alpha$,
\item[(ii)] 
	there exists $\delta>0$ such that for all $s\in[t,T]$ and for any $f,f'\in\mathcal{C}([t,T];\mathbb{R}^d)$ and $g,g'\in L^0([t,T];V))$ it holds: 
        \center{$f=f'$  and $g=g'$ a.e. on $[t,s]$ $\Rightarrow$ $\alpha(\cdot,f,g)=\alpha(\cdot,f',g')$ a.e. on  $[t,s+\delta]$ for any $\omega\in\Omega_\alpha$.}
\end{itemize}
The set of random strategies for Player 1 is denoted by $\mathcal{A}^r(t)$.\\
The definition of random strategies $((\Omega_\beta,\mathcal{G}_\beta,\mathbb{P}_\beta),\beta)$, where $\beta: [t,T]\times \Omega_\beta \times\mathcal{C}([t,T];\mathbb{R}^d)\times L^0([t,T];U)\rightarrow V$ for Player 2 is similar. The set of random strategies for Player 2 is denoted by $\mathcal{B}^r(t)$.
\end{defi}

\begin{rem} Again one can associate to each couple of random strategies $(\alpha,\beta)\in\mathcal{A}^r(t)\times\mathcal{B}^r(t)$ for any $(\omega_\alpha,\omega_\beta)\in\Omega_\alpha\times\Omega_\beta$ a unique couple of admissible strategies $(u^{\omega_\alpha,\omega_\beta},v^{\omega_\alpha,\omega_\beta})\in\mathcal{U}(t)\times\mathcal{V}(t)$, such that for all $\omega\in\mathcal{C}([t,T];\mathbb{R}^d)$, $s\in[t,T]$
\begin{equation*}
\alpha(s,\omega_\alpha,\omega,v^{\omega_\alpha,\omega_\beta}(\omega))=u^{\omega_\alpha,\omega_\beta}_s(\omega)\ \ \ \  \textnormal{ and }\ \ \ \  \beta(s,\omega_\beta,\omega,u^{\omega_\alpha,\omega_\beta}(\omega))=v^{\omega_\alpha,\omega_\beta}_s(\omega)\ .
\end{equation*}
Furthermore $(\omega_\alpha,\omega_\beta)\rightarrow (u^{\omega_\alpha,\omega_\beta},v^{\omega_\alpha,\omega_\beta})$ is a measurable map, from $\Omega_\alpha\times\Omega_\beta$ equipped with the $\sigma$-field $\mathcal{G}_\alpha\otimes\mathcal{G}_\beta$  to $\mathcal{V}(t)\times\mathcal{U}(t)$ equipped with the  Borel $\sigma$-field associated to the $L^1$-distance.
\end{rem}

For any $(t,x,p)\in[t_0,T[\times\mathbb{R}^d\times\Delta(I)$, $\bar \alpha \in (\mathcal{A}^r(t))^I$, $\beta\in\mathcal{B}^r(t)$ we set
        		\begin{eqnarray}
        			J(t,x,p,\bar \alpha,\beta)=\sum_{i=1}^I p_i \ \mathbb{E}_{\bar \alpha_i,\beta}\left[\int_0^Tl_i(s,X_s^{t,x,\bar \alpha_i,\beta},  (\bar\alpha_i)_s,\beta_s)ds+g_i(X_T^{t,x,	\bar\alpha_i,\beta})\right],
		\end{eqnarray}
where as in Remark 2.5. we associate to ${\bar \alpha_i},\beta$ for any $(\omega_{\bar \alpha_i},\omega_\beta)\in\Omega_{\bar \alpha_i}\times\Omega_\beta$ the couple of controls $(u^{\omega_{\bar \alpha_i},\omega_\beta},v^{\omega_{\bar \alpha_i},\omega_\beta})$. The process $X^{t,x,\bar \alpha_i,\beta}$ is then defined for any $(\omega_{\bar \alpha_i},\omega_\beta)$ as solution to the SDE (1) with the associated controls. Furthermore $\mathbb{E}_{\bar \alpha_i,\beta}$ is the expectation on $\Omega_{\bar \alpha_i}\times\Omega_\beta\times\mathcal{C}([t,T];\mathbb{R}^d)$ with respect to the probability $\mathbb{P}_{\bar \alpha_i}\otimes\mathbb{P}_\beta\otimes\mathbb{P}^0$, where $\mathbb{P}^0$ denotes the Wiener measure on $\mathcal{C}([t,T];\mathbb{R}^d).$\\

Under assumption (A) the existence of the value of the game and its characterization as a viscosity solution to an obstacle problem is shown in  \cite{Ca},\cite{CaRa2}.
        	\begin{thm}
       		 For any $(t,x,p)\in[t_0,T[\times\mathbb{R}^d\times\Delta(I)$ the value of the game with incomplete information $V(t,x,p)$ is given by
			\begin{equation}
				\begin{array}{rcl}
		 			V(t,x,p) &=& \inf_{\bar \alpha \in (\mathcal{A}^r(t))^I}\sup_{\beta\in \mathcal{B}^r(t)} J(t,x,p,\bar \alpha,\beta)\\
					\ \\
					&=&\sup_{\beta\in \mathcal{B}^r(t)} \inf_{\bar \alpha \in (\mathcal{A}^r(t))^I} J(t,x,p,\bar \alpha,\beta).
				\end{array}
			\end{equation}
		Furthermore the function $V:[0,T[\times \mathbb{R}^d\times\Delta(I)\rightarrow\mathbb{R}$ is the unique viscosity solution to
\begin{equation}
	\min \left\{ \frac{\partial w} {\partial t}+\frac{1}{2}\textnormal{tr}(\sigma\sigma^*(t,x)D_x^2w)+H(t,x,D_xw,p),\lambda_{\min}\left(p,\frac{\partial ^2 w}{\partial p^2}\right)\right\}=0
\end{equation}
with terminal condition $w(T,x,p)=\sum_{i}p_ig_i(x)$, where for all $p\in\Delta(I)$, $A\in\mathcal{S}^I$
\begin{eqnarray}
\lambda_{\min}(p,A):=\min_{z\in T_{\Delta(I)(p)}\setminus\{0\}} \frac{\langle Az,z\rangle}{|z|^2}.
\end{eqnarray}
and  $T_{\Delta(I)(p)}$ denotes the tangent cone to $\Delta(I)$ at $p$, i.e. $T_{\Delta(I)(p)}=\overline{\cup_{\lambda>0}(\Delta(I)-p)/\lambda}$ .
		
		        \end{thm}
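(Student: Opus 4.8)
The statement collects the main results of \cite{CaRa2} and \cite{Ca}; the plan is to follow their four-stage argument: a priori regularity of the two values, a sub- and super-dynamic programming principle, the deduction of the viscosity sub- and supersolution properties, and a comparison principle that closes everything. Throughout, write $V^+$ and $V^-$ for the inf-sup and the sup-inf in (6).

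\textbf{Step 1 (regularity and $p$-convexity).} From the boundedness and Lipschitz hypotheses (A)(i)--(iii), the usual moment bounds for (1) (Gronwall together with Burkholder--Davis--Gundy) and the estimate (4) on $H$, one checks that $V^+$ and $V^-$ are bounded, Lipschitz in $x$ uniformly in $(t,p)$ and $\tfrac12$-H\"older in $t$. The feature proper to asymmetric information is convexity in $p$. For $V^-$ this is immediate: for fixed $\beta$ the $I$ coordinates of $\bar\alpha\mapsto J(t,x,\cdot,\bar\alpha,\beta)$ can be minimised separately, so $\inf_{\bar\alpha}J(t,x,\cdot,\bar\alpha,\beta)$ is affine in $p$ and $V^-$ is a supremum of affine functions. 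For $V^+$ one uses the Aumann--Maschler splitting: if $p=\sum_k\mu_kp^k$, the informed player first draws, conditionally on the realised index, an auxiliary label $k$ with the Bayes-consistent law, and then plays an $\varepsilon$-optimal strategy for the prior $p^k$; this gives $V^+(t,x,p)\le\sum_k\mu_kV^+(t,x,p^k)+\varepsilon$, hence convexity.

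\textbf{Step 2 (dynamic programming and viscosity property).} One next establishes, for every $\tau\in[t,T]$, a sub-DPP for $V^+$ and a super-DPP for $V^-$ in which the running cost is accumulated up to $\tau$ and $V$ is then evaluated at $(\tau,X_\tau,p_\tau)$, where $p_\tau$ is the martingale of conditional probabilities (``beliefs'') generated along the play by Player~1's behaviour. The proof is the concatenation-of-strategies argument of \cite{FS}, the new ingredient being that one must also concatenate the way Player~1 discloses information; this is where the random-strategy formalism of Definition~2.4 and the measurable selection in Remark~2.5 are used. A test-function argument of Fleming--Souganidis type then yields the viscosity properties. When a smooth $\varphi$ touches $V^+$ from above at a point, convexity of $V^+$ in $p$ forces $\lambda_{\min}(p,\partial^2_p\varphi)\ge0$ there (a convex function can be touched from above only at a point of $p$-differentiability, where its $p$-Hessian is nonnegative), and, this $p$-Hessian being nonnegative, the informed player gains nothing by revealing information, so the sub-DPP reduces to the classical one and gives, through the Isaacs condition (A)(iv), that the first entry of the minimum in (7) is $\ge0$ at $\varphi$; hence $V^+$ is a viscosity subsolution. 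For $V^-$, the required inequality is automatic at a test point where the test function is not strictly convex in $p$ (the $\lambda_{\min}$-entry is then $\le0$), while at a point of strict $p$-convexity the super-DPP again reduces to the classical one and gives that the first entry of the minimum in (7) is $\le0$; hence $V^-$ is a viscosity supersolution. The terminal condition $w(T,x,p)=\sum_i p_i g_i(x)$ follows from the $\tfrac12$-H\"older bound in $t$.

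\textbf{Step 3 (comparison, the main obstacle).} Finally one proves a comparison principle: if $w_1$ is a bounded u.s.c. subsolution and $w_2$ a bounded l.s.c. supersolution of (7) with $w_1(T,\cdot,\cdot)\le w_2(T,\cdot,\cdot)$, then $w_1\le w_2$. Applied to $w_1=V^+$, $w_2=V^-$ it gives $V^+\le V^-$, while $V^-\le V^+$ holds by definition of inf-sup versus sup-inf; so the value of the game exists, equals the common function, and is the unique viscosity solution of (7), which proves (6) and the characterisation. This is the delicate step: away from the obstacle the equation carries no second-order term in $p$, it is posed on the simplex $\Delta(I)$ whose boundary must be handled via the tangent cone in the definition (8) of $\lambda_{\min}$, and the obstacle is a genuine second-order constraint in $p$. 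Following \cite{CaRa2}, one penalises to reduce to the relative interior of a face of $\Delta(I)$, replaces the subsolution by its convex envelope in $p$ -- which remains a subsolution, since the obstacle is ``convexity-favourable'' and on the contact set with the envelope the parabolic inequality is inherited -- and then runs the classical doubling-of-variables method in the $(t,x)$ variables only: the linear non-degenerate second-order term is treated with the Jensen--Ishii lemma, the first-order error terms are absorbed using (4), and the terminal inequality closes a Gronwall estimate in $t$. I expect this comparison step, and within it the verification that $p$-convexification preserves the subsolution property, to be the principal difficulty.
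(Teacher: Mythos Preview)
The paper does not give its own proof of this theorem: it is stated as a quotation of the results of \cite{Ca} and \cite{CaRa2}, with the sentence ``Under assumption (A) the existence of the value of the game and its characterization as a viscosity solution to an obstacle problem is shown in \cite{Ca}, \cite{CaRa2}'' serving in lieu of a proof. Your proposal correctly identifies this and sketches the argument from those references; the four-stage plan (regularity and $p$-convexity via the Aumann--Maschler splitting, sub/super dynamic programming with the belief martingale, the viscosity sub/supersolution derivation, and the comparison principle via $p$-convexification) is an accurate summary of the approach in \cite{CaRa2} and \cite{Ca}, so there is nothing to compare beyond noting that your write-up is more detailed than what the present paper actually contains.
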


\begin{rem}
Unlike the standard definition of viscosity solutions (see e.g. \cite{CIL}) the subsolution property to (7) is required only on the interior of $\Delta(I)$ while the supersolution property to (7) is required on the whole domain $\Delta(I)$ (see \cite{Ca} and \cite{CaRa2}). This is due to the fact that we actually consider viscosity solutions with a state constraint, namely $p\in\Delta(I)\subsetneq\mathbb{R}^I$. For more details we refer to \cite{CaDoL}.
\end{rem}

\section{Approximation of the value function}

\subsection{Numerical scheme}

Our approximation scheme of the value function basically amounts to approximate the solution of the obstacle problem (7). In order to do so it is convenient to consider the real dynamics of the game (1) under a Girsanov transform. This technique  - first applied to stochastic differential games by \cite{HaLe} - enables us to decouple the forward dynamics (1) from the controls of the players. As in \cite{Ba} where this transformation is applied in the context of numerical approximation for stochastic differential games via BSDE we will use the following approximation for the forward dynamics
.\\

For $L\in\mathbb{N}$ we define a partition of $[t_0,T]$ with stepsize $\tau=\frac{T}{L}$ by $\Pi^\tau=\{t_0,t_1,\ldots ,t_L=T\}$. Then for all $k=0,\ldots ,L$, $x\in\mathbb{R}^d$, $p\in\Delta(I)$ let $(X^{t_k,x}_s)_{s\in[t_k,T]}$ denote the diffusion 
\begin{eqnarray}
 X^{t_k,x}_s=x+\int_{t_k}^s \sigma(r,X_r^{t,x})dB_r.
\end{eqnarray}
Furthermore we define the discrete process $(\bar X^{k,x}_{n})_{n=k,\ldots , L}$ as the standard Euler scheme approximation for (9) on $\Pi^\tau$
\begin{eqnarray}
\bar X^{k,x}_{n}=x+\sum_{j=k}^{n-1} \sigma(t_{j},\bar X^{k,x}_{j})\Delta B^{j},
\end{eqnarray}
where $\Delta B^{j}= B_{t_{j+1}}- B_{t_{j}}$.\\
We will approximate the value function (6) backwards in time. To do so we set for all $x\in\mathbb{R}^d$, $p\in\Delta(I)$
\begin{equation}
V^\tau(t_L,x,p)=\langle p, g(x)\rangle
\end{equation}
 and we define recursively for $k=L-1,\ldots,0$
 \begin{equation}
\begin{array}{rcl}
V^\tau(t_{k-1},x,p)&=&\textnormal{Vex}_p\left( \mathbb{E}\left[V^\tau(t_{k},\bar X^{{k-1},x}_{k},p)\right]+\tau H(t_{k-1},x,\bar z_{k-1}(x,p),p)\right),
\end{array}
\end{equation}
where $\bar z_{k-1}(x,p)$ is given by 
\begin{eqnarray}
\bar z_{k-1}(x,p)=\frac{1}{\tau}\mathbb{E}\left[V^\tau(t_{k},\bar X^{{k-1},x}_{k},p) (\sigma^*)^{-1}(t_{k-1},x)\Delta B^{k-1}\right]
\end{eqnarray}
and $\textnormal{Vex}_p$ denotes the convex hull, i.e. the largest function that is convex in the variable $p$ and does not exceed the given function.

\subsection{Some regularity properties}

\subsubsection{Monotonicity}

First we show that our scheme fulfills a monotonicity condition which corresponds to the one in \cite{BaSu} (2.2). It is well known that this criteria is crucial for the convergence of general finite difference schemes.\\

\begin{lem}
Let $\phi:\mathbb{R}^d\rightarrow\mathbb{R}$ be a uniformly Lipschitz continuous function with Lipschitz constant $M$. Then there exists for all $x,x'\in \mathbb{R}^d$ a $\theta\in\mathbb{R}^d$ with $|\theta|\leq M$
\begin{equation*}
\phi(x)-\phi(x')=\langle \theta, x-x'\rangle
\end{equation*}
\end{lem}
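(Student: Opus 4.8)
The plan is to write down $\theta$ explicitly rather than to invoke any mean value theorem, so that no regularity beyond Lipschitz continuity is needed. First I would dispose of the degenerate case: if $x = x'$ then the claimed identity reads $\phi(x) - \phi(x') = 0$, which is true, and any $\theta$ with $|\theta| \le M$ works, e.g. $\theta = 0$.

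Now assume $x \neq x'$. I would set
\[
\theta := \frac{\phi(x) - \phi(x')}{|x - x'|^{2}}\,(x - x') \in \mathbb{R}^{d}.
\]
A one-line computation then gives $\langle \theta, x - x'\rangle = \dfrac{\phi(x) - \phi(x')}{|x - x'|^{2}}\,\langle x - x', x - x'\rangle = \phi(x) - \phi(x')$, which is exactly the desired representation. For the norm bound I would use the definition of the Lipschitz constant directly:
\[
|\theta| = \frac{|\phi(x) - \phi(x')|}{|x - x'|^{2}}\,|x - x'| = \frac{|\phi(x) - \phi(x')|}{|x - x'|} \le M .
\]
This completes the argument.

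There is essentially no obstacle here: the only point requiring a moment's care is the case $x = x'$, where the normalizing denominator degenerates, and that case is handled trivially as above. For $x \neq x'$ the statement is an immediate consequence of normalizing the increment vector and applying the Lipschitz estimate. (An alternative route through the mean value theorem applied to $t \mapsto \phi(x' + t(x - x'))$ would additionally require a smoothing/approximation argument to cover merely Lipschitz $\phi$, so the explicit choice above is preferable.)
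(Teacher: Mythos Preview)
Your argument is correct. The explicit choice $\theta = \dfrac{\phi(x)-\phi(x')}{|x-x'|^{2}}(x-x')$ does the job with no regularity beyond Lipschitz continuity, and the degenerate case $x=x'$ is handled.

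This is a genuinely different route from the paper. The paper first treats $\phi\in C^{1}$ via the fundamental theorem of calculus, taking $\theta=\int_{0}^{1}D_{x}\phi(x+r(x'-x))\,dr$, and then passes to general Lipschitz $\phi$ by mollification and a compactness argument on the resulting $\theta^{\epsilon}$. Your construction is more elementary: it sidesteps both the smoothing step and the compactness extraction, and it works in one line. The price is that your $\theta$ is always parallel to $x-x'$, whereas the paper's $\theta$ (for smooth $\phi$) is an average of gradients and need not point in that direction. For the statement of the lemma this extra structure is irrelevant, and indeed when the lemma is invoked later (e.g.\ in the monotonicity Lemma~3.2) only the existence of some $\theta$ with $|\theta|\le M$ is used. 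The integral-of-gradients formula does reappear in the paper in other places (the random $\Theta^{1},\Theta^{2},\Theta^{3}$ in Lemma~3.3), which may explain why the authors chose that presentation here, but those instances are argued directly for smooth $\phi$ and do not rely on Lemma~3.1. Your parenthetical remark about the mean-value route requiring approximation is exactly what the paper does.
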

\begin{proof}
For $\phi\in C^1$ the result follows from partial integration with $\theta=\int_0^1D_x\phi(x+r(x'-x))dr$. For the case of general Lipschitz continuous function $\phi$ one chooses a sequence of $\mathcal{C}^1$ functions $(\phi^\epsilon)_{\epsilon>0}$ which converges uniformly to $\phi$. Since $\phi$ is uniformly Lipschitz continuous, we may assume that the absolute value of $D_x\phi^\epsilon$ and hence the corresponding $\theta^\epsilon$ are uniformly bounded by the constant $M$. Consequently, possibly passing though a subsequence, there exists a $\theta\in\mathbb{R}^d$ with $|\theta|\leq M$ such that the lemma holds.
\end{proof}

With the help of Lemma 3.1 we now establish:

\begin{lem}
Let $k\in\{0,\ldots,L-1\}$ and $\phi,\psi: \mathbb{R}^d\rightarrow\mathbb{R}$ be two Lipschitz continuous functions. Then for any $x\in\mathbb{R}$, $p\in\Delta(I)$
\begin{eqnarray*}
	&&\mathbb{E}\left[\phi(\bar X^{k,x}_{{k+1}})\right]+\tau H(t_k,x,\frac{1}{\tau} \mathbb{E}\left[ \phi(\bar X^{k, x}_{{k+1}})  (\sigma^*)^{-1}(t_k,x) \Delta B^{k}\right],p)\\
	 &&\ \ \ \geq \mathbb{E}\left[\psi(\bar X^{k,x}_{{k+1}})\right]+\tau H(t_k,x,\frac{1}{\tau} \mathbb{E}\left[ \psi(\bar X^{k,x}_{{k+1}})  (\sigma^*)^{-1}(t_k,x) \Delta B^{k}\right],p)-\tau \mathcal{O}(\tau),
\end{eqnarray*}
where $\mathcal{O}(\tau)$ is independent of $p$.
\end{lem}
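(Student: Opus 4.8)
The plan is to compare the two sides of the asserted inequality by writing their difference as a single expectation and estimating it through monotonicity of $\mathbb{E}$ together with the Lipschitz bound (4) on $H$. Since this is the discrete counterpart of the Barles--Souganidis monotonicity condition, I shall use that $\phi\geq\psi$ (without it the statement fails, as constant $\phi,\psi$ show); put $\rho:=\phi-\psi\geq0$, which is Lipschitz with some constant $L_\rho$ not exceeding the sum of the Lipschitz constants of $\phi$ and $\psi$, and write $Y:=\bar X^{k,x}_{k+1}$. Since the Euler scheme (10) starts from $\bar X^{k,x}_{k}=x$, one step gives $Y=x+\sigma(t_k,x)\Delta B^{k}$, hence $Y-x=\sigma(t_k,x)\Delta B^{k}$; as $\sigma(t_k,x)$ is a fixed nonsingular matrix by Assumption (A)(ii), also $\Delta B^{k}=\sigma^{-1}(t_k,x)(Y-x)$.

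First I would rewrite the difference of the two vectors $z_\phi,z_\psi$ that appear as the gradient argument of $H$ on the left-hand side. Using $\mathbb{E}[\Delta B^{k}]=0$ and then $\Delta B^{k}=\sigma^{-1}(t_k,x)(Y-x)$,
\[
z_\phi-z_\psi=\frac1\tau\,\mathbb{E}\bigl[\rho(Y)\,(\sigma^*)^{-1}(t_k,x)\Delta B^{k}\bigr]=\frac1\tau\,(\sigma\sigma^*)^{-1}(t_k,x)\,\mathbb{E}\bigl[\rho(Y)(Y-x)\bigr],
\]
where I used $(\sigma^*)^{-1}\sigma^{-1}=(\sigma\sigma^*)^{-1}$; the same identity may alternatively be obtained from Lemma 3.1 applied to $\rho$, or from Gaussian integration by parts. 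By (A)(ii) the operator norm of $(\sigma\sigma^*)^{-1}(t_k,x)$ is bounded by a constant $M$ depending only on the bound on $(\sigma^*)^{-1}$, so (4) yields
\[
\tau\,\bigl|H(t_k,x,z_\phi,p)-H(t_k,x,z_\psi,p)\bigr|\;\leq\;c\,\tau\,|z_\phi-z_\psi|\;\leq\;\kappa\,\mathbb{E}\bigl[\rho(Y)\,|Y-x|\bigr],\qquad\kappa:=cM,
\]
with $\kappa$ independent of $p$ (indeed of $k$ and $x$ as well), where I used $\rho\geq0$. Hence the difference of the two sides of the claimed inequality satisfies
\[
\Bigl(\mathbb{E}[\phi(Y)]+\tau H(t_k,x,z_\phi,p)\Bigr)-\Bigl(\mathbb{E}[\psi(Y)]+\tau H(t_k,x,z_\psi,p)\Bigr)\;\geq\;\mathbb{E}\bigl[\rho(Y)\,(1-\kappa|Y-x|)\bigr].
\]

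The only genuine point is to bound this last expectation below by $-\tau\,\mathcal O(\tau)$. Fix the $\tau$-independent radius $R:=1/(2\kappa)$. On $\{|Y-x|\leq R\}$ the integrand is $\geq\tfrac12\rho(Y)\geq0$ and may be discarded, while on $\{|Y-x|>R\}$ one simply has $\rho(Y)(1-\kappa|Y-x|)\geq-\kappa\,\rho(Y)|Y-x|$; thus the expectation is $\geq-\kappa\,\mathbb{E}[\rho(Y)|Y-x|\,\mathbf{1}_{\{|Y-x|>R\}}]$. Estimating $\rho(Y)\leq\rho(x)+L_\rho|Y-x|$, $|Y-x|\leq\|\sigma\|_\infty|\Delta B^{k}|$ and $\Delta B^{k}\sim\mathcal N(0,\tau I_d)$, a standard Gaussian tail bound gives $\mathbb{E}[\rho(Y)|Y-x|\,\mathbf{1}_{\{|Y-x|>R\}}]\leq C\,e^{-c_0/\tau}$ for all small $\tau$, with $C,c_0>0$ depending only on $\rho(x)$, $L_\rho$, $\|\sigma\|_\infty$, $d$ and $\kappa$, in particular not on $p$. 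Since $e^{-c_0/\tau}\leq\tau^{2}$ for small $\tau$, this is of the required form $\tau\,\mathcal O(\tau)$ with $\mathcal O(\tau)$ independent of $p$. (If $\rho$ is bounded -- the case in the applications, where $\phi,\psi$ are iterates of the bounded scheme (11)--(13) -- one may replace the truncation by a H\"older interpolation $\mathbb{E}[\rho(Y)|Y-x|]\leq\|\rho\|_\infty^{1/4}\mathbb{E}[\rho(Y)]^{3/4}\mathbb{E}[|Y-x|^{4}]^{1/4}\leq C_1\sqrt\tau\,\mathbb{E}[\rho(Y)]^{3/4}$ followed by Young's inequality $C_1\sqrt\tau\,\mathbb{E}[\rho(Y)]^{3/4}\leq\tfrac34\mathbb{E}[\rho(Y)]+\tfrac14C_1^{4}\tau^{2}$, reaching the same $\mathcal O(\tau^2)$ bound.)

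The crux is exactly this last estimate. The naive bound $|z_\phi-z_\psi|\leq L_\rho\|(\sigma^*)^{-1}\|_\infty\|\sigma\|_\infty d$, which follows at once from Lemma 3.1, is only $\mathcal O(1)$, hence gives a per-step monotonicity defect of order $\tau$ -- worthless once summed over the $\sim T/\tau$ time steps in the convergence argument of Section 4. The improvement to $\mathcal O(\tau^{2})$ rests on two quantities that come for free: the ``dangerous'' part of $z_\phi-z_\psi$ lives on the event $\{|Y-x|>R\}$ for a fixed $R$, which carries exponentially small Gaussian mass, and what remains is dominated by the nonnegative term $\mathbb{E}[\rho(Y)]=\mathbb{E}[\phi(Y)]-\mathbb{E}[\psi(Y)]$ supplied by $\phi\geq\psi$. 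Everything else -- the one-step Euler identity, the rewriting of $z_\phi-z_\psi$, and the use of (4) -- is routine; it only remains to note that $\mathrm{Vex}_p$ is order preserving, so that applying this lemma inside the recursion (12) preserves the estimate.
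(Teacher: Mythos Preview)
Your argument is correct and follows essentially the same strategy as the paper: both recognise that the unstated hypothesis $\phi\geq\psi$ is needed, combine the $H$-difference with $\mathbb{E}[\rho(Y)]$ into a single expectation of the form $\mathbb{E}[\rho(Y)\cdot(\text{something close to }1)]$, then split according to the size of the Brownian increment and use a Gaussian tail bound to get an exponentially small (hence $\tau\mathcal{O}(\tau)$) error.

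The technical execution differs in two small ways. First, the paper invokes Lemma~3.1 to write $H(\cdot,z_\phi,\cdot)-H(\cdot,z_\psi,\cdot)=\langle\theta,z_\phi-z_\psi\rangle$ exactly, obtaining the integrand $\rho(Y)\bigl(1+\langle\theta,(\sigma^*)^{-1}(t_k,x)\Delta B^k\rangle\bigr)$; you use only the Lipschitz inequality (4), arriving at the pointwise lower bound $\rho(Y)(1-\kappa|Y-x|)$. Second, the paper assumes $\rho$ bounded (``$0\leq\phi-\psi\leq c$'') and gets an error constant depending only on that bound and on (A), whereas you use the Lipschitz property of $\rho$ and get a constant involving $\rho(x)$ and $L_\rho$. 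Both are adequate for the applications in Section~4, where $\phi,\psi$ arise from the bounded, uniformly Lipschitz scheme; the paper's version has the slight advantage of a cleaner universal constant, while yours avoids the auxiliary Lemma~3.1.
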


\begin{proof}
By (4) $H$ is uniformly Lipschitz continuous in $\xi$. So by Lemma 3.1. there exists a $\theta\in\mathbb{R}^d$ with $|\theta|\leq M$, where $M$ denotes the Lipschitz constant of $H$, such that
\begin{eqnarray*}
	&&\mathbb{E}\left[(\phi-\psi)(\bar X^{k,x}_{{k+1}})\right] + \tau \bigg(H(t_k,x,\frac{1}{\tau} \mathbb{E}\left[ \phi(\bar X^{k,x}_{{k+1}})  (\sigma^*)^{-1}( t_k,x) \Delta B^{k}\right],p)\\
	&&\ \ \ \ \ - H(t_k,x,\frac{1}{\tau} \mathbb{E}\left[ \psi(\bar X^{k,x}_{{k+1}})  (\sigma^*)^{-1}( t_k,x) \Delta B^{k}\right],p))\bigg)\\
	&&=\mathbb{E}\left[(\phi-\psi)(\bar X^{k,x}_{{k+1}})\right]+ \left \langle \tau \theta,\left(\frac{1}{\tau} \mathbb{E}\left[ \phi(\bar X^{k,x}_{{k+1}})  (\sigma^*)^{-1}( t_k,x) \Delta B^{k}\right]-\frac{1}{\tau} \mathbb{E}\left[ \psi(\bar X^{k,x}_{{k+1}})  (\sigma^*)^{-1}( t_k,x) \Delta B^{k}\right]\right)\right\rangle\\
	&&=\mathbb{E}\left[(\phi-\psi)(\bar X^{k,x}_{{k+1}})\right]+\left\langle \theta , \mathbb{E}\left[(\phi-\psi)(\bar X^{k,x}_{{k+1}})(\sigma^*)^{-1}(t_k,x)\Delta B^k\right]\right\rangle\\
	&&=\mathbb{E}\left[(\phi-\psi)(\bar X^{k,x}_{{k+1}})\left(1+\langle \theta ,(\sigma^*)^{-1}(t_k,x) \Delta B^k\rangle \right)\right].
\end{eqnarray*}
Since $0\leq \phi(x)-\psi(x) \leq c$ for any $x\in\mathbb{R}$, we have
\begin{eqnarray*}
&&\mathbb{E}\left[(\phi-\psi)(\bar X^{k,x}_{{k+1}})\left(1+\langle \theta, (\sigma^*)^{-1}(t_k,x) \Delta B^k\rangle\right) \right]\\
&\geq&\mathbb{E}\left[(\phi-\psi)(\bar X^{k,x}_{t_{k+1}}) 1_{|\Delta B^k|\geq\|\theta\sigma^{-1}\|_\infty^{-1}}\langle \theta, (\sigma^*)^{-1}(t_k,x) \Delta B^k\rangle \right]\\
&\geq& - C \mathbb{E}\left[1_{|\Delta B^k|\geq {\frac{1}{C}}} |\Delta B^k|\right] 
\end{eqnarray*}
with  $C:=\|M\sigma^{-1}\|_\infty$ independent of $(t_k,x,p)$ and $\tau$. Furthermore we can explicitely calculate
\begin{eqnarray*}
 \mathbb{E}\left[1_{|\Delta B^k|\geq  \frac{1}{C}} |\Delta B^k|\right]= \ \frac{1}{(2\pi)^\frac{d}{2}(\tau)^\frac{1}{2}}\int_{|x|\geq\frac{1}{C}}^\infty |x| e^{-\frac{x^2}{2\tau}}dx
=\frac{1}{2^{\frac{d}{2}-1}\Gamma(\frac{d}{2})}\tau^\frac{1}{2} e^{-\frac{1}{2 C^2 \tau}},
\end{eqnarray*}
where $\Gamma$ denotes the gamma function.
\end{proof}

\subsubsection{Lipschitz continuity in $x$}
To show that the Lipschitz continuity in $x$ is preserved under the scheme, we establish the following Lemma.

\begin{lem}
Let $k\in\{0,\ldots,L-1\}$ and $\phi:\mathbb{R}^d\rightarrow\mathbb{R}$ be a uniformly Lipschitz continuous function with Lipschitz constant $M$. Then for any $k\in\{0,\ldots,L-1\}$, $x,x'\in\mathbb{R}$,  $p\in\Delta(I)$
\begin{eqnarray*}
	&&\bigg|\mathbb{E}\left[\phi(\bar X^{k,x}_{{k+1}})\right]+\tau H(t_k,x,\frac{1}{\tau} \mathbb{E}\left[ \phi(\bar X^{k, x}_{{k+1}})  (\sigma^*)^{-1}(t_k,x) \Delta B^{k}\right],p)\\
	&&\ \ \ \ -\mathbb{E}\left[\phi(\bar X^{k,x'}_{{k+1}})\right]-\tau H(t_k,x',\frac{1}{\tau} \mathbb{E}\left[\phi(\bar X^{k, x'}_{{k+1}})  (\sigma^*)^{-1}(t_k,x') \Delta B^{k}\right],p)\bigg|\\
	&&\ \ \ \ \ \ \ \ \ \ \ \ \ \leq C^{M,\tau} |x-x'|,
\end{eqnarray*}
where $C^{M,\tau}=M(1+c\tau)+c \tau$ with $c$ independent of $p$.
\end{lem}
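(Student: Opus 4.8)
The plan is to split the quantity inside the absolute value into two groups: the difference of the expectation terms $\mathbb{E}[\phi(\bar X^{k,x}_{k+1})]-\mathbb{E}[\phi(\bar X^{k,x'}_{k+1})]$, and the difference of the two Hamiltonian terms. For the first group, I would use the Lipschitz continuity of $\phi$ together with a standard estimate for the Euler scheme, namely that $\mathbb{E}[|\bar X^{k,x}_{k+1}-\bar X^{k,x'}_{k+1}|]\le |x-x'|(1+c\tau)$ or at least $\le |x-x'|$ up to a term of order $\tau$; since over a single step $\bar X^{k,x}_{k+1}-\bar X^{k,x'}_{k+1}=(x-x')+(\sigma(t_k,x)-\sigma(t_k,x'))\Delta B^k$ and $\mathbb{E}[\Delta B^k]=0$, taking expectations of the absolute value and using the Lipschitz bound on $\sigma$ and $\mathbb{E}[|\Delta B^k|]\le c\tau^{1/2}$ gives a bound of the form $|x-x'|(1+c\tau^{1/2})$; actually, since we only need $\mathbb{E}[(\phi(\bar X^{k,x}_{k+1})-\phi(\bar X^{k,x'}_{k+1}))]$ and not the expectation of the absolute value, I can be more careful and pull the $\Delta B^k$ term out using Lemma 3.1 as in the proof of Lemma 3.3, writing $\phi(\bar X^{k,x}_{k+1})-\phi(\bar X^{k,x'}_{k+1})=\langle\theta,(x-x')+(\sigma(t_k,x)-\sigma(t_k,x'))\Delta B^k\rangle$ with $|\theta|\le M$, so that the $\Delta B^k$ term has zero mean and $|\mathbb{E}[\phi(\bar X^{k,x}_{k+1})-\phi(\bar X^{k,x'}_{k+1})]|\le M|x-x'|$ cleanly.

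For the Hamiltonian terms, I would use estimate (4), which controls $H$ both in its $(t,x)$-dependence (Lipschitz with constant $c(1+|\xi|)$) and in its $\xi$-dependence (Lipschitz with constant $c$). The $(t,x)$-part contributes, after multiplication by $\tau$, a term $\tau\, c(1+|\bar z_{k}(x,p)|)|x-x'|$, and here I need the a priori bound $|\bar z_k(x,p)|\le c$ coming from (13): indeed $\bar z_k(x,p)=\tfrac1\tau\mathbb{E}[\phi(\bar X^{k,x}_{k+1})(\sigma^*)^{-1}(t_k,x)\Delta B^k]$, and since $\phi$ is bounded one can subtract off the constant $\phi(x)$ (whose contribution vanishes because $\mathbb{E}[\Delta B^k]=0$) and then use $\|(\sigma^*)^{-1}\|_\infty$, the Lipschitz constant of $\phi$ and $\sigma$, and $\mathbb{E}[|\Delta B^k|^2]=d\tau$ to get $|\bar z_k(x,p)|\le c$ uniformly in $\tau$, $x$, $p$. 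The $\xi$-part of (4) contributes $\tau\, c\,|\bar z_k(x,p)-\bar z_k(x',p)|$, so the remaining task is to bound the difference of the two $\bar z$-terms by $c|x-x'|$ (again uniformly in $\tau$ and $p$). For this I would write, using $\phi(\bar X^{k,x}_{k+1})=\langle\theta,\ldots\rangle+\phi(x')$-type decompositions, the difference as a sum of terms each of which is either $O(|x-x'|)$ directly (from the Lipschitz continuity of $(\sigma^*)^{-1}$ and of $\phi$) after using $\mathbb{E}[|\Delta B^k|^2]\le c\tau$ to kill the extra $\tfrac1\tau$, or has zero mean; care is needed because of the $\tfrac1\tau$ prefactor, but each surviving term carries a compensating factor $\mathbb{E}[|\Delta B^k|^2]$ or $\mathbb{E}[|\Delta B^k|^2]\cdot(\text{Lip})$ of order $\tau$.

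Collecting the pieces gives a bound of the form $M|x-x'|+\tau c(1+c)|x-x'|+\tau c\cdot c|x-x'|$, which after absorbing constants is exactly $C^{M,\tau}|x-x'|$ with $C^{M,\tau}=M(1+c\tau)+c\tau$. The main obstacle I anticipate is the bookkeeping around the factor $\tfrac1\tau$ in $\bar z_k$: one must consistently exploit that $\Delta B^k$ is centered with $\mathbb{E}[|\Delta B^k|^2]$ of order $\tau$, so that every term that does not vanish in expectation comes with a compensating $\tau$, and in particular establish the uniform-in-$\tau$ bound $|\bar z_k(x,p)|\le c$ first; once that is in place the rest is a routine application of (4) and Lemma 3.1.
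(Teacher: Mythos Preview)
There is a genuine gap. Two of your intermediate claims fail, and for the same underlying reason: the vector $\theta$ produced by Lemma~3.1 depends on the random points $\bar X^{k,x}_{k+1},\bar X^{k,x'}_{k+1}$ and is therefore correlated with $\Delta B^k$. In your treatment of the expectation difference you cannot conclude that $\mathbb E[\langle\theta,(\sigma(t_k,x)-\sigma(t_k,x'))\Delta B^k\rangle]=0$; the honest bound is only $M(1+c\sqrt\tau)|x-x'|$. More importantly, the uniform estimate $|\bar z_k(x,p)-\bar z_k(x',p)|\le c|x-x'|$ is simply false: for $\sigma\equiv I$, $d=1$ and $\phi(y)=|y|$ one computes $\bar z_k'(0)=\tfrac1\tau\,\mathbb E[\mathrm{sign}(\Delta B^k)\Delta B^k]=\tfrac1\tau\,\mathbb E[|\Delta B^k|]\sim\tau^{-1/2}$. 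In your own decomposition of $\bar z_k(x)-\bar z_k(x')$ the culprit is the piece $\tfrac1\tau\,\mathbb E[\langle\Theta,(x-x')\rangle\,(\sigma^*)^{-1}(t_k,x)\Delta B^k]$, which neither vanishes in mean nor carries a compensating $|\Delta B^k|^2$. The upshot is that your splitting only delivers $C^{M,\tau}=M(1+c\sqrt\tau)+c\sqrt\tau$, and iterating this $L=T/\tau$ times in Proposition~3.4 blows up.

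The paper's remedy is precisely \emph{not} to separate the expectation difference from the $\xi$-variation of $H$. After peeling off the $x$-dependence of $H$ at fixed $\xi$ (cost $c(1+M)\tau|x-x'|$, using $|\bar z_k|\le M$), the remaining two pieces are kept together as
\[
\mathbb E\Big[(\phi(\bar X^{k,x}_{k+1})-\phi(\bar X^{k,x'}_{k+1}))\,\big(1+\langle\theta^1,(\sigma^*)^{-1}(t_k,x)\Delta B^k\rangle\big)\Big]
\]
plus a small correction from $(\sigma^*)^{-1}(t_k,x)-(\sigma^*)^{-1}(t_k,x')$. Writing $\phi(\bar X^{k,x}_{k+1})-\phi(\bar X^{k,x'}_{k+1})=\langle\Theta^2,\bar X^{k,x}_{k+1}-\bar X^{k,x'}_{k+1}\rangle$ with $|\Theta^2|\le M$ and applying Cauchy--Schwarz to the \emph{combined} expression, the cross-terms linear in $\Delta B^k$ disappear inside the second moment, leaving $M|x-x'|(1+c\tau)^{1/2}\le M|x-x'|(1+\tfrac c2\tau)$. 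This coupling, not a term-by-term estimate, is what produces the factor $(1+c\tau)$ instead of $(1+c\sqrt\tau)$.
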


\begin{proof}
We fix $k\in\{0,\ldots,L-1\}$, $x,x'\in\mathbb{R}$,  $p\in\Delta(I)$ and write
\begin{equation}
\begin{array}{rcl}
	&&\mathbb{E}\left[\phi(\bar X^{k,x}_{{k+1}})-\phi(\bar X^{k,x'}_{{k+1}})\right] + \tau \bigg(H(t_k,x,\frac{1}{\tau} \mathbb{E}\left[ \phi(\bar X^{k,x}_{{k+1}})  (\sigma^*)^{-1}( t_k,x) \Delta B^{k}\right],p)\\
	&&\ \ \ \ \ - H(t_k,x',\frac{1}{\tau} \mathbb{E}\left[ \phi(\bar X^{k,x'}_{{k+1}})  (\sigma^*)^{-1}( t_k,x') \Delta B^{k}\right],p)\bigg)\\
	&&=\mathbb{E}\left[\phi(\bar X^{k,x}_{{k+1}})-\phi(\bar X^{k,x'}_{{k+1}})\right]\\
	&&\ \ \  + \tau \bigg(H(t_k,x,\frac{1}{\tau} \mathbb{E}\left[ \phi(\bar X^{k,x}_{{k+1}})  (\sigma^*)^{-1}( t_k,x) \Delta B^{k}\right],p)\\
	&&\ \ \  \ \ \  \ \ \  \ \ \  \ \ \  \ \ \  \ \ \  - H(t_k,x,\frac{1}{\tau} \mathbb{E}\left[ \phi(\bar X^{k,x'}_{{k+1}})  (\sigma^*)^{-1}( t_k,x') \Delta B^{k}\right],p)\bigg)\\
	&&\ \ \  + \tau \bigg(H(t_k,x,\frac{1}{\tau} \mathbb{E}\left[ \phi(\bar X^{k,x'}_{{k+1}})  (\sigma^*)^{-1}( t_k,x') \Delta B^{k}\right],p)\\
	&&\ \ \  \ \ \  \ \ \  \ \ \  \ \ \  \ \ \  \ \ \   - H(t_k,x',\frac{1}{\tau} \mathbb{E}\left[ \phi(\bar X^{k,x'}_{{k+1}})  (\sigma^*)^{-1}( t_k,x') \Delta B^{k}\right],p)\bigg).
\end{array}
\end{equation}
Assume that $\phi\in \mathcal{C}^1$ with $|D_x\phi|\leq M$. First we consider the last term of (14). We have for $\Theta^1:=\int_0^1 D_x\phi(x'+r \sigma(t_{k},x')\Delta B^{k}) dr$ that $|\Theta^1|\leq M$ and
\begin{eqnarray*}
\left|\frac{1}{\tau} \mathbb{E}\left[ \phi(\bar X^{k,x'}_{{k+1}})  (\sigma^*)^{-1}( t_k,x') \Delta B^{k}\right]\right|
& =& \frac{1}{\tau}\left|\mathbb{E}\left[\phi(x') (\sigma^*)^{-1}(t_{k},x')\Delta B^{k}+\Theta^1 |\Delta B^{k}|^2\right]\right|\\
& \leq& M.
\end{eqnarray*}
Since by (4) the Hamiltonian $H$ is uniformly Lipschitz continuous in $x$ with Lipschitz constant $c(1+|\xi|)$ it holds
\begin{eqnarray*}
&&\tau \bigg(H(t_k,x,\frac{1}{\tau} \mathbb{E}\left[ \phi(\bar X^{k,x'}_{{k+1}})  (\sigma^*)^{-1}( t_k,x') \Delta B^{k}\right],p)- H(t_k,x',\frac{1}{\tau} \mathbb{E}\left[ \phi(\bar X^{k,x'}_{{k+1}})  (\sigma^*)^{-1}( t_k,x') \Delta B^{k}\right],p)\bigg)\\
	&&\leq \tau c(1+M) |x-x'|.
\end{eqnarray*}

For the remaining terms in (14) we note that by (4) the Hamiltonian $H$ is uniformly Lipschitz continuous. So there exists as in Lemma 3.1. a $\theta^1\in\mathbb{R}^d$ with $|\theta^1|\leq c$, such that
\begin{eqnarray}
&&\mathbb{E}\left[\phi(\bar X^{k,x}_{{k+1}})-\phi(\bar X^{k,x'}_{{k+1}})\right]\nonumber\\
	&&\ \ \  + \tau \bigg(H(t_k,x,\frac{1}{\tau} \mathbb{E}\left[ \phi(\bar X^{k,x}_{{k+1}})  (\sigma^*)^{-1}( t_k,x) \Delta B^{k}\right],p)- H(t_k,x,\frac{1}{\tau} \mathbb{E}\left[ \phi(\bar X^{k,x'}_{{k+1}})  (\sigma^*)^{-1}( t_k,x') \Delta B^{k}\right],p)\bigg)\nonumber\\
	&&=\mathbb{E}\left[\phi(\bar X^{k,x}_{{k+1}})-\phi(\bar X^{k,x'}_{{k+1}})\right] \nonumber \\
	&&\ \ \ \ \ \ \ \ \ + \langle \theta^1, 
	 \left(\mathbb{E}\left[ \phi(\bar X^{k,x}_{{k+1}})  (\sigma^*)^{-1}( t_k,x) \Delta B^{k}\right]-\mathbb{E}\left[ \phi(\bar X^{k,x'}_{{k+1}})  (\sigma^*)^{-1}( t_k,x') \Delta B^{k}\right]\right)\rangle\nonumber\\
	&&=\mathbb{E}\left[(\phi(\bar X^{k,x}_{{k+1}})-\phi(\bar X^{k,x'}_{{k+1}}) )(1+ \langle \theta^1, (\sigma^*)^{-1}( t_k,x) \Delta B^{k}\rangle)\right]\\
	&&\ \ \ \ \ \ \ \ + \mathbb{E}\left[\langle \theta^1, \phi(\bar X^{k,x'}_{{k+1}})  ((\sigma^*)^{-1}( t_k,x)-(\sigma^*)^{-1}( t_k,x')) \Delta B^{k}\rangle\right]\nonumber.
\end{eqnarray}
For the first term of (15) we have with $\Theta^2:=\int_0^1D_x\phi(\bar X^{k,x}_{{k+1}}+r(\bar X^{k,x'}_{{k+1}}-\bar X^{k,x}_{{k+1}}))dr$ 
\begin{eqnarray*}
&&\mathbb{E}\left[(\phi(\bar X^{k,x}_{{k+1}})-\phi(\bar X^{k,x'}_{{k+1}}) )(1+ \langle \theta^1,  (\sigma^*)^{-1}( t_k,x) \Delta B^{k}\rangle)\right]\\
	&&=\mathbb{E}\left[\left\langle \Theta^2, \bar X^{k,x}_{{k+1}}-\bar X^{k,x'}_{{k+1}} \right\rangle(1+ \langle \theta^1,(\sigma^*)^{-1}( t_k,x) \Delta B^{k}\rangle)\right]\\
	&&\leq\mathbb{E}\left[\left\langle \Theta^2,(1+\langle \theta^1,(\sigma^*)^{-1}( t_k,x) \Delta B^{k}\rangle)(x-x')+ (\sigma(t_k,x)-\sigma(t_k,x')) \Delta B^k\right\rangle\right]\\
	&&\ \ \ \ \ \ \ \ \ \ \ \ \ +c\tau|x-x'|.
\end{eqnarray*}
We finally use Cauchy-Schwartz (note that in the expansion of the square the $\Delta B^k$ parts vanish when taking expectation), $|\Theta^2|\leq M$ and the Lipschitz contiunity of $\sigma$ to get
\begin{eqnarray*}
&&\mathbb{E}\left[\left\langle \Theta^2, (1+\langle \theta^1,(\sigma^*)^{-1}( t_k,x) \Delta B^{k}\rangle)(x-x')+ (\sigma(t_k,x)-\sigma(t_k,x')) \Delta B^k\right\rangle\right]\\
&&\leq M \mathbb{E}\left[\left((1+\langle \theta^1,(\sigma^*)^{-1}( t_k,x) \Delta B^{k}\rangle)(x-x')+ (\sigma(t_k,x)-\sigma(t_k,x')) \Delta B^k\right)^2\right]^\frac{1}{2}\\
	&&\leq  M|x-x'| \left(\mathbb{E}\left[1+c|\Delta B^k|^2\right]\right)^\frac{1}{2}= M|x-x'| (1+c\tau)^\frac{1}{2}\leq M|x-x'| (1+\frac{c}{2}\tau).
\end{eqnarray*}
For the second term of (15) we use the uniform Lipschitz continuity of $(\sigma^*)^{-1}$ (by assumption (A)) to have with the $\mathbb{R}^d$-valued random variable $\Theta^3:=\int_0^1D_x\phi(\bar X^{k,x'}_{{k+1}}+r(\bar X^{k,x'}_{{k+1}}-x'))dr$
\begin{eqnarray*}
&&\mathbb{E}\left[\langle \theta^1, \phi(\bar X^{k,x'}_{{k+1}})  ((\sigma^*)^{-1}( t_k,x)-(\sigma^*)^{-1}( t_k,x')) \Delta B^{k}\rangle\right]\\
	&&=\mathbb{E}\left[ \langle \theta^1, \phi(\bar X^{k,x'}_{{k+1}})  ((\sigma^*)^{-1}(t_k,x)-(\sigma^*)^{-1}(t_k,x')) \Delta B^{k}\rangle\right]\\
	&&=\mathbb{E}\left[\langle \theta^1, (\phi(x')+\langle \Theta^3, \sigma(t,x') \Delta B^{k}\rangle)((\sigma^*)^{-1}(t_k,x)-(\sigma^*)^{-1}(t_k,x')) \Delta B^{k}\rangle \right]\\
	&&\leq c M \tau |x-x'|.
\end{eqnarray*}
The case of Lipschitz continuous $\phi$ follows by approximation with a sequence of $\mathcal{C}^1$ functions $(\phi^\epsilon)_{\epsilon>0}$ which converges uniformly to $\phi$. Since $\phi$ is uniformly Lipschitz continuous with constant $M$, we may assume that $|D_x\phi^\epsilon|\leq M$ for all $\epsilon>0$.
\end{proof}

With the previous Lemma it is easy to show the Lipschitz continuity of $V^\tau(t_{\cdot},x,p)$ in $x$.

\begin{prop}
$V^\tau(t_{\cdot},x,p)$ is uniformly Lipschitz continuous in $x$ with a Lipschitz constant that depends only on the constants of assumption (A).
\end{prop}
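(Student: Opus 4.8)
The plan is a backward induction on the time index $k$, combining Lemma 3.4 (which controls the Lipschitz constant under one step of the scheme \emph{before} convexification) with the elementary observation that the operator $\mathrm{Vex}_p$ does not increase the Lipschitz constant in $x$. Concretely, I would show that there are constants $M_L\ge M_{L-1}\ge\cdots$ (to be determined) such that, for every $k$, the map $x\mapsto V^\tau(t_k,x,p)$ is Lipschitz with constant $M_k$ \emph{uniformly in} $p\in\Delta(I)$, and then check that $M_0$ is bounded by a quantity depending only on the constants of assumption (A).

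\emph{Initialization and one-step propagation.} At $k=L$ we have $V^\tau(t_L,x,p)=\langle p,g(x)\rangle=\sum_i p_ig_i(x)$, which by assumption (A)(iii) is Lipschitz in $x$ with constant $M_L:=\max_i\mathrm{Lip}(g_i)$, uniformly in $p$ since $p\in\Delta(I)$. For the inductive step, assume $x\mapsto V^\tau(t_{k+1},x,p)$ is $M_{k+1}$-Lipschitz uniformly in $p$. Fix $p$ and apply Lemma 3.4 with $\phi=V^\tau(t_{k+1},\cdot,p)$, noting that the vector $\bar z_k(x,p)$ from \eqref{} (formula (13)) is exactly the argument $\tfrac1\tau\mathbb{E}[\phi(\bar X^{k,x}_{k+1})(\sigma^*)^{-1}(t_k,x)\Delta B^k]$ appearing there. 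Writing $\Phi_k(x,p):=\mathbb{E}[V^\tau(t_{k+1},\bar X^{k,x}_{k+1},p)]+\tau H(t_k,x,\bar z_k(x,p),p)$, Lemma 3.4 gives $|\Phi_k(x,p)-\Phi_k(x',p)|\le C^{M_{k+1},\tau}|x-x'|$ with $C^{M_{k+1},\tau}=M_{k+1}(1+c\tau)+c\tau$ and $c$ independent of $p$.

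\emph{Convexification preserves the estimate.} The key remaining point is that if $x\mapsto f(x,p)$ is $K$-Lipschitz uniformly in $p\in\Delta(I)$, then so is $x\mapsto \mathrm{Vex}_p f(x,\cdot)(p)$, with the same constant $K$. This follows because for fixed $x,x'$ one has $f(x,\cdot)\le f(x',\cdot)+K|x-x'|$ pointwise on $\Delta(I)$; since adding a constant in $p$ commutes with $\mathrm{Vex}_p$ and $\mathrm{Vex}_p$ is monotone, $\mathrm{Vex}_p f(x,\cdot)(p)\le \mathrm{Vex}_p f(x',\cdot)(p)+K|x-x'|$, and exchanging the roles of $x$ and $x'$ gives the reverse bound. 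Applying this with $f=\Phi_k$ and $K=C^{M_{k+1},\tau}$ and recalling the recursion (12), we conclude that $x\mapsto V^\tau(t_k,x,p)$ is Lipschitz with constant $M_k:=C^{M_{k+1},\tau}=M_{k+1}(1+c\tau)+c\tau$, uniformly in $p$.

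\emph{Iterating the bound.} The recursion $M_k=M_{k+1}(1+c\tau)+c\tau$ with $M_L=\max_i\mathrm{Lip}(g_i)$ solves to $M_k\le (1+c\tau)^{L-k}\bigl(M_L+1\bigr)-1\le e^{cT}\bigl(M_L+1\bigr)$ for all $k$, using $\tau(L-k)\le\tau L=T$. Hence $V^\tau(t_\cdot,x,p)$ is uniformly Lipschitz in $x$ with constant $e^{cT}\bigl(\max_i\mathrm{Lip}(g_i)+1\bigr)$, which depends only on the constants of assumption (A). The only mildly delicate step is the compatibility of $\mathrm{Vex}_p$ with the Lipschitz estimate in the third paragraph, but since it relies solely on monotonicity of $\mathrm{Vex}_p$ and its invariance under additive constants, it is routine; the rest is bookkeeping built on Lemma 3.4. $\qed$
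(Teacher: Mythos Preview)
Your proof is correct and follows essentially the same approach as the paper's: backward induction using the one-step Lipschitz lemma (Lemma~3.3 in the paper's numbering, not~3.4) together with the observation that $\mathrm{Vex}_p$ does not enlarge the Lipschitz constant in $x$, then iterating the recursion $M_k=M_{k+1}(1+c\tau)+c\tau$ to a $\tau$-independent bound. Your argument is in fact slightly more explicit than the paper's about why $\mathrm{Vex}_p$ preserves the estimate; the only harmless slip is that the $M_k$ increase as $k$ decreases, so the chain in your first paragraph should read $M_L\le M_{L-1}\le\cdots$.
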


\begin{proof}
We will show Proposition 3.5. by induction. With (A) we have that $V^\tau(t_L,x,p)$ is Lipschitz continuous in $x$ with a constant $M_L$ that depends only on the constants of assumption (A). Let $M_k$ be the Lipschitz constant for $V^\tau(t_k,\cdot,p)$ then by (12) and Lemma 3.3. and since Vex is monotonic, we have
\begin{eqnarray*}
|V^\tau(t_{k-1},x,p)-V^\tau(t_{k-1},x',p)|\leq M_k((1+c\tau)^\frac{1}{2}+c\tau)+c \tau)|x-x'|.
\end{eqnarray*}
Hence $M_{k-1}:=M_k(1+c\tau)+c \tau$ is a Lipschitz constant for $V^\tau(t_{k-1},\cdot,p)$ and $M:=M_L C e^{{CT}}$  for a $C$ independent of $\tau,x,p$ is a constant dominating the recursively defined Lipschitz constants $(M_k)_{k=0,\ldots,L}$.
\end{proof}
\ \\
\ \\
With the uniform Lipschitz continuity of $V^\tau$ in $x$ it follows that the value function is uniformly bounded.

\begin{prop}
$V^\tau(t_{\cdot},x,p)$ is uniformly bounded by a constant only depending on the constants of assumption (A).
\end{prop}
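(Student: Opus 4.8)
The plan is to argue by backward induction on $k$, mirroring the structure of the proof of Proposition 3.5: I will exhibit constants $C_k$ with $|V^\tau(t_k,x,p)|\le C_k$ for all $x\in\mathbb{R}^d$, $p\in\Delta(I)$, and show that the recursion they satisfy telescopes to a bound independent of $\tau$. For the base case, Assumption (A)(iii) gives that the $g_i$ are bounded, so $|V^\tau(t_L,x,p)|=|\langle p,g(x)\rangle|\le \max_i\|g_i\|_\infty=:C_L$.

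For the inductive step I would use two elementary observations. First, since $\textnormal{Vex}_p$ produces the largest convex-in-$p$ minorant, the constant function $p\mapsto\inf_{p'}f(p')$ lies below $f$, so $\inf_{p'}f(p')\le\textnormal{Vex}_p(f)(p)\le f(p)$ and hence $\sup_p|\textnormal{Vex}_p(f)(p)|\le\sup_p|f(p)|$; the convexification step in (12) never enlarges the sup norm. Second, by the linear growth bound (3), $|H(t_{k-1},x,\xi,p)|\le c(1+|\xi|)$, so everything reduces to a uniform bound on $\bar z_{k-1}(x,p)$. Granting $|V^\tau(t_k,\cdot,\cdot)|\le C_k$, one has $|\mathbb{E}[V^\tau(t_k,\bar X^{k-1,x}_k,p)]|\le C_k$, and combining with the first observation yields $|V^\tau(t_{k-1},x,p)|\le C_k+\tau c(1+\sup_{p'}|\bar z_{k-1}(x,p')|)$.

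The heart of the matter — and the only step I expect to require care — is bounding $\bar z_{k-1}(x,p)$ uniformly in $\tau$, which looks delicate because of the $\frac1\tau$ prefactor in (13). Here I would invoke Proposition 3.5: $V^\tau(t_k,\cdot,p)$ is Lipschitz in $x$ with a constant $M$ independent of $\tau$. Using $\bar X^{k-1,x}_k-x=\sigma(t_{k-1},x)\Delta B^{k-1}$ and Lemma 3.1 (reducing to the $\mathcal{C}^1$ case exactly as there), write $V^\tau(t_k,\bar X^{k-1,x}_k,p)=V^\tau(t_k,x,p)+\langle\Theta,\sigma(t_{k-1},x)\Delta B^{k-1}\rangle$ with $|\Theta|\le M$. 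Substituting into (13), the deterministic term contributes $\frac1\tau V^\tau(t_k,x,p)(\sigma^*)^{-1}(t_{k-1},x)\mathbb{E}[\Delta B^{k-1}]=0$, while the remaining term is bounded in norm by $\frac1\tau\,M\|\sigma\|_\infty\|(\sigma^*)^{-1}\|_\infty\,\mathbb{E}[|\Delta B^{k-1}|^2]=M\|\sigma\|_\infty\|(\sigma^*)^{-1}\|_\infty\, d=:C_z$, which depends only on the constants of Assumption (A) and on $d$, not on $\tau$, $x$, $p$ or $k$. It is this exact cancellation of the zero-mean increment, combined with the $\tau$-uniform $x$-Lipschitz estimate, that makes the scheme stable.

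Putting the pieces together gives the recursion $C_{k-1}:=C_k+\tau c(1+C_z)$, and iterating from $k=L$ down to $k=0$ yields $\sup_{k,x,p}|V^\tau(t_k,x,p)|\le C_0=\max_i\|g_i\|_\infty+Tc(1+C_z)$, a bound depending only on the constants of Assumption (A), which is the assertion of the proposition.
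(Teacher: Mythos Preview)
Your proof is correct and follows essentially the same approach as the paper's: both establish the key uniform bound $|\bar z_{k-1}(x,p)|\le C_z$ by exploiting the $\tau$-uniform Lipschitz continuity of $V^\tau$ in $x$ (Proposition~3.4) so that the $\frac1\tau$ prefactor is compensated by $\mathbb{E}[|\Delta B^{k-1}|^2]=d\tau$, then combine with the linear growth~(3) of $H$ and iterate backwards. Your treatment of the $\textnormal{Vex}_p$ step via $\inf_{p'}f(p')\le\textnormal{Vex}_p(f)\le f$ is in fact slightly more explicit than the paper's, and your constant $C_z=M\|\sigma\|_\infty\|(\sigma^*)^{-1}\|_\infty d$ differs from the paper's claimed bound $M$ only by harmless factors depending on~(A).
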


\begin{proof}
Fix $k\in\{0,L-1\}$, $x\in\mathbb{R}^d$, $p\in\Delta(I)$. Assume first that $V^\tau$ is at $t_{k+1}$ continuously differentiable in the second variable  with $|D_xV^\tau|\leq M$. Then with $\Theta:=\int_0^1 D_xV^\tau(t_{k+1},x+r \sigma(t_{k},x)\Delta B^{k},p) dr$
\begin{equation}
\begin{array}{rcl}
|\bar z_k(x,p)|
&=&  \frac{1}{\tau}\left| \mathbb{E}\left[V^\tau(t_{k+1},x+\sigma(t_k,x)\Delta B^k,p) (\sigma^*)^{-1}(t_{k},x)\Delta B^{k}\right]\right|\\
\ \\
& =& \frac{1}{\tau}\left|\mathbb{E}\left[V^\tau(t_{k+1},x,p) (\sigma^*)^{-1}(t_{k},x)\Delta B^{k}+\Theta |\Delta B^{k}|^2\right]\right|\\
\ \\
& \leq& M.
\end{array}
\end{equation}
Since $V^\tau$ is by Lemma 3.3. uniformly Lipschitz continuous in $x$ one has (16) in the general case again by regularization.\\
By (A) $V^\tau(t_L,x,p)$ is bounded by  a constant $M_L$ that depends only on the constants of assumption (A). Let $M_k$ be a bound for $|V^\tau(t_k,\cdot,p)|$ then by (3) the definition (12)  and (16) we have
\begin{eqnarray*}
\mathbb{E}\left[V^\tau(t_{k},\bar X^{{k-1},x}_{k},p)\right]+\tau H(t_{k-1},x,\bar z_{k-1}(x,p),p)\leq M_k +   c \tau(1+M) 
\end{eqnarray*}
and $M_L + c T (1+M)$ is a constant dominating the recursively defined constants $(M_k)_{k=0,\ldots,L}$.
\end{proof}

\subsubsection{Lipschitz continuity in $p$}

The Lipschitz continuity of $V^\tau(t_{\cdot},x,p)$ in $p$ can be shown with similar methods.

\begin{lem}
Let $k\in\{0,\ldots,L-1\}$ and $\phi:\mathbb{R}^d\times\Delta(I)\rightarrow\mathbb{R}$ be a uniformly Lipschitz continuous function with Lipschitz constant $M$. Then for any $k\in\{0,\ldots,L-1\}$, $x\in\mathbb{R}^d$, $p,p'\in\Delta(I)$
\begin{eqnarray*}
	&&\bigg|\mathbb{E}\left[\phi(\bar X^{k,x}_{{k+1}},p)\right]+\tau H(t_k,x,\frac{1}{\tau} \mathbb{E}\left[ \phi(\bar X^{k, x}_{{k+1}},p)  (\sigma^*)^{-1}(t_k,x) \Delta B^{k}\right],p)\\
	&&\ \ \ \ -\mathbb{E}\left[\phi(\bar X^{k,x}_{{k+1}},p')\right]-\tau H(t_k,x,\frac{1}{\tau} \mathbb{E}\left[\phi(\bar X^{k, x}_{{k+1}},p')  (\sigma^*)^{-1}(t_k,x) \Delta B^{k}\right],p')\bigg|\\
	&&\ \ \ \ \ \ \ \ \ \ \ \leq \bar C^{M,\tau} |p-p'|,
\end{eqnarray*}
where $\bar C^{M,\tau}=M(1+c\tau)+c \tau$.
\end{lem}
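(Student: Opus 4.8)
The plan is to follow exactly the structure of the proof of Lemma 3.4, replacing the role of the spatial variable $x$ by the belief variable $p$. Since in Lemma 3.7 the diffusion $\bar X^{k,x}_{k+1}$ and the matrix $(\sigma^*)^{-1}(t_k,x)$ do \emph{not} depend on $p$, the argument is in fact simpler: there are no ``frozen-coefficient'' error terms coming from the Lipschitz continuity of $\sigma$ or $(\sigma^*)^{-1}$, and the only dependence on $p$ sits in $\phi(\cdot,p)$ and in the last slot of the Hamiltonian $H$. First I would assume $\phi\in\mathcal C^1$ with $|D_p\phi|\le M$ (the general Lipschitz case follows at the end by the same uniform regularization argument as in Lemmas 3.1, 3.3 and 3.4), and split the difference to be estimated into two pieces, in analogy with display (14): one piece in which only the $p$ in the last slot of $H$ is changed, and one piece in which $\phi(\cdot,p)$ is replaced by $\phi(\cdot,p')$ (both inside the expectation and inside the $\xi$-slot of $H$).

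For the first piece I would use estimate (4): $H$ is Lipschitz in its last argument with constant $c$, so that term is bounded by $c\tau|p-p'|$, \emph{provided} the $\xi$-argument $\frac1\tau\mathbb E[\phi(\bar X^{k,x}_{k+1},p')(\sigma^*)^{-1}(t_k,x)\Delta B^k]$ stays bounded; exactly as in Lemma 3.4 one writes $\phi(\bar X^{k,x}_{k+1},p')=\phi(x,p')+\langle\Theta^1,\sigma(t_k,x)\Delta B^k\rangle$ with $|\Theta^1|\le M$, so this $\xi$-argument has modulus $\le M$ and (4) applies. For the second piece I would again invoke Lemma 3.1 to produce $\theta^1\in\mathbb R^d$ with $|\theta^1|\le c$ (the Lipschitz constant of $H$ in $\xi$) such that the $H$-difference equals $\langle\theta^1,\cdot\rangle$ of the corresponding difference of $\xi$-arguments; collecting terms this piece becomes
\[
\mathbb E\Big[(\phi(\bar X^{k,x}_{k+1},p)-\phi(\bar X^{k,x}_{k+1},p'))\big(1+\langle\theta^1,(\sigma^*)^{-1}(t_k,x)\Delta B^k\rangle\big)\Big],
\]
and then, writing $\phi(\bar X^{k,x}_{k+1},p)-\phi(\bar X^{k,x}_{k+1},p')=\langle\Theta^2,p-p'\rangle$ with $|\Theta^2|\le M$, a Cauchy–Schwarz estimate (noting, as in Lemma 3.4, that the cross terms linear in $\Delta B^k$ vanish under the expectation) bounds it by $M|p-p'|(1+c\tau)^{1/2}\le M|p-p'|(1+\tfrac c2\tau)$. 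Adding the two pieces gives the claimed constant $\bar C^{M,\tau}=M(1+c\tau)+c\tau$ (after absorbing $(1+c\tau)^{1/2}$ into $1+c\tau$), with $c$ independent of $p$.

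I do not expect a genuine obstacle here; the main point to be careful about is bookkeeping — making sure that when $\phi$ depends on both $x$ and $p$ the constant $M$ is interpreted as the \emph{joint} Lipschitz constant so that $|D_p\phi|\le M$, and that the boundedness of the $\xi$-arguments (needed before applying (4) and before applying Lemma 3.1) is established first, since otherwise the Lipschitz-in-$\xi$ estimates are not uniform. One should also record, as in the previous lemmas, that all error constants are independent of $\tau$, $x$ and $p$, which is what makes this usable in the subsequent induction proving uniform Lipschitz continuity of $V^\tau(t_\cdot,x,\cdot)$ in $p$ (exactly parallel to Proposition 3.5, using that $\mathrm{Vex}_p$ does not increase the Lipschitz constant in $p$).
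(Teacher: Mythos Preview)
Your proposal is correct and follows essentially the same route as the paper's own proof: split off the $p$-dependence in the last slot of $H$ using (4) to get the $c\tau|p-p'|$ term, then apply Lemma~3.1 to the $\xi$-slot to collapse the remainder into $\mathbb{E}[(\phi(\bar X^{k,x}_{k+1},p)-\phi(\bar X^{k,x}_{k+1},p'))(1+\langle\theta,(\sigma^*)^{-1}(t_k,x)\Delta B^k\rangle)]$, and finish with Cauchy--Schwarz and regularization. One minor remark: your check that the $\xi$-argument is bounded by $M$ is not actually needed here, since by (4) the Lipschitz constant of $H$ in its last slot is $c$ \emph{independent of} $\xi$ (unlike the $x$-Lipschitz constant $c(1+|\xi|)$ used in Lemma~3.3); the paper accordingly omits that verification.
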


\begin{proof}
We fix $k\in\{0,\ldots,L-1\}$, $x\in\mathbb{R}^d$,  $p,p'\in\Delta(I)$. First note that by (4) the Hamiltonian is uniformly Lipschitz in $p$. Hence
\begin{eqnarray*}
	&&\mathbb{E}\left[\phi(\bar X^{k,x}_{{k+1}},p)-\phi(\bar X^{k,x}_{{k+1}},p')\right] + \tau \bigg(H(t_k,x,\frac{1}{\tau} \mathbb{E}\left[ \phi(\bar X^{k,x}_{{k+1}},p)  (\sigma^*)^{-1}( t_k,x) \Delta B^{k}\right],p)\\
	&&\ \ \ \ \ - H(t_k,x,\frac{1}{\tau} \mathbb{E}\left[ \phi(\bar X^{k,x}_{{k+1}},p')  (\sigma^*)^{-1}( t_k,x) \Delta B^{k}\right],p')\bigg)\\
	&&\leq \mathbb{E}\left[\phi(\bar X^{k,x}_{{k+1}},p)-\phi(\bar X^{k,x}_{{k+1}},p')\right] + \tau \bigg(H(t_k,x,\frac{1}{\tau} \mathbb{E}\left[ \phi(\bar X^{k,x}_{{k+1}},p)  (\sigma^*)^{-1}( t_k,x) \Delta B^{k}\right],p)\\
	&&\ \ \ \ \ - H(t_k,x,\frac{1}{\tau} \mathbb{E}\left[ \phi(\bar X^{k,x}_{{k+1}},p')  (\sigma^*)^{-1}( t_k,x) \Delta B^{k}\right],p)\bigg)+c\tau|p-p'|.
\end{eqnarray*}

 By (4) the Hamiltonian $H$ is uniformly Lipschitz continuous in $\xi$ with a constant $c$. So by Lemma 3.1. 
\begin{eqnarray*}
	&&\mathbb{E}\left[\phi(\bar X^{k,x}_{{k+1}},p)-\phi(\bar X^{k,x}_{{k+1}},p')\right] + \tau \bigg(H(t_k,x,\frac{1}{\tau} \mathbb{E}\left[ \phi(\bar X^{k,x}_{{k+1}},p)  (\sigma^*)^{-1}( t_k,x) \Delta B^{k}\right],p)\\
	&&\ \ \ \ \ - H(t_k,x,\frac{1}{\tau} \mathbb{E}\left[ \phi(\bar X^{k,x}_{{k+1}},p')  (\sigma^*)^{-1}( t_k,x) \Delta B^{k}\right],p)\bigg)\\
	&&=\mathbb{E}\left[\phi(\bar X^{k,x}_{{k+1}},p)-\phi(\bar X^{k,x}_{{k+1}},p')\right]\\
	&&\ \ \ \ \  +\left \langle \theta,
\mathbb{E}\left[ \phi(\bar X^{k,x}_{{k+1}},p)  (\sigma^*)^{-1}( t_k,x) \Delta B^{k}\right]-\mathbb{E}\left[ \phi(\bar X^{k,x}_{{k+1}},p')  (\sigma^*)^{-1}( t_k,x) \Delta B^{k}\right]\right\rangle\\
	&&=\mathbb{E}\left[(\phi(\bar X^{k,x}_{{k+1}},p)-\phi(\bar X^{k,x}_{{k+1}},p') )(1+\langle \theta, (\sigma^*)^{-1}( t_k,x) \Delta B^{k})\rangle\right].
\end{eqnarray*}
Assume for now that  $\phi$ is differentiable in $p$ with $|D_p\phi|\leq M$. Then with $\Theta:=\int_0^1 D_p\phi(\bar X^{k,x}_{{k+1}},p+r(p-p')) dr$ we have
\begin{eqnarray*}
	&&\mathbb{E}\left[(\phi(\bar X^{k,x}_{{k+1}},p)-\phi(\bar X^{k,x}_{{k+1}},p') )(1+\langle \theta, (\sigma^*)^{-1}( t_k,x) \Delta B^{k})\rangle\right]\\
	&&=\mathbb{E}\left[\langle \Theta,(1+ \langle \theta,  (\sigma^*)^{-1}( t_k,x) \Delta B^{k}\rangle)(p-p')\rangle\right]\\
	&&\leq M |p-p'| \left(\mathbb{E}\left[1+c|\Delta B^k|^2\right]\right)^\frac{1}{2}= M|p-p'| (1+c\tau)^\frac{1}{2}\leq M|p-p'| (1+\frac{c}{2}\tau),
\end{eqnarray*}
where for the first estimate in the last line we used again Cauchy Schwartz as in the previous Lemma. The general case follows again by regularization.\\

\end{proof}

It is now easy to show the Lipschitz continuity of $V^\tau(t_{\cdot},x,p)$ in $p$ as in Proposition 3.4.

\begin{prop}
$V^\tau(t_{\cdot},x,p)$ is uniformly Lipschitz continuous in $p$ with a Lipschitz constant only depending on the constants of assumption (A).
\end{prop}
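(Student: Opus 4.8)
The plan is to imitate the proof of the Lipschitz continuity in $x$: argue by backward induction over the time grid, with the preceding Lemma playing the role of Lemma 3.3, and to close with a discrete Gronwall estimate. Concretely, I would prove by downward induction on $k\in\{L,L-1,\ldots,0\}$ the statement: $p\mapsto V^\tau(t_k,x,p)$ is Lipschitz on $\Delta(I)$ uniformly in $x$, with a constant $M_k$ obeying $M_{k-1}=M_k(1+c\tau)+c\tau$, where $c$ depends only on the constants of (A). The base case $k=L$ is immediate from (11): $p\mapsto\langle p,g(x)\rangle$ is linear, hence $M_L$-Lipschitz in $p$ with $M_L$ depending only on the bounds on $(g_i)$ in (A)(iii). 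For the inductive step, by (12) one writes $V^\tau(t_{k-1},x,p)=\textnormal{Vex}_p\bigl(F_k(x,\cdot)\bigr)(p)$ with $F_k(x,p):=\mathbb{E}\bigl[V^\tau(t_k,\bar X^{k-1,x}_k,p)\bigr]+\tau H(t_{k-1},x,\bar z_{k-1}(x,p),p)$. Applying the preceding Lemma to $\phi=V^\tau(t_k,\cdot,\cdot)$ — which is Lipschitz in $x$ by the Lipschitz-in-$x$ estimate (Proposition 3.5) and Lipschitz in $p$ with constant $M_k$ by the induction hypothesis — yields $|F_k(x,p)-F_k(x,p')|\le\bar C^{M_k,\tau}|p-p'|=(M_k(1+c\tau)+c\tau)|p-p'|$, uniformly in $x$. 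Passing this bound through $\textnormal{Vex}_p$ gives $M_{k-1}:=M_k(1+c\tau)+c\tau$.

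The recursion then rewrites as $M_{k-1}+1=(M_k+1)(1+c\tau)$, so that $M_k+1=(M_L+1)(1+c\tau)^{L-k}\le(M_L+1)e^{c\tau L}=(M_L+1)e^{cT}$; hence $\sup_kM_k\le(M_L+1)e^{cT}-1=:M$, a constant depending only on the constants of (A) and on $T$, which is the assertion.

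The step that needs care is the passage through $\textnormal{Vex}_p$. In the Lipschitz-in-$x$ argument the perturbation produced by Lemma 3.3 is independent of $p$, so it commutes with $\textnormal{Vex}_p$ and mere monotonicity of $\textnormal{Vex}_p$ suffices; here the perturbation produced by the preceding Lemma is of order $|p-p'|$ in the $p$-variable itself, and one genuinely needs that taking the convex hull in $p$ does not increase the modulus of Lipschitz continuity over the convex set $\Delta(I)$. I would either cite this standard property of the convex envelope, or establish it directly from the representation $\textnormal{Vex}_p f(p)=\min\{\sum_j\lambda_j f(p_j):\lambda_j\ge0,\ \sum_j\lambda_j=1,\ \sum_j\lambda_j p_j=p\}$: given an optimal representation $\sum_j\lambda_j q_j=q$ of a competitor $q\in\Delta(I)$, one produces a representation $\sum_j\lambda_j q_j'=p$ of $p$ with $q_j'\in\Delta(I)$ and $\sum_j\lambda_j|q_j'-q_j|\le|p-q|$, by redistributing mass among the $q_j$ according to the positive and negative parts of $p-q$; this gives $\textnormal{Vex}_pf(p)\le\textnormal{Vex}_pf(q)+M|p-q|$, and symmetrically. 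One should also note, as in the $x$-case, that it is the Lipschitz modulus of $\phi$ in $p$ that enters the constant $\bar C^{M,\tau}$ in the preceding Lemma, or else simply carry along the joint Lipschitz constant of $V^\tau(t_k,\cdot,\cdot)$, which is bounded uniformly in $k$ and $\tau$ by the Lipschitz-in-$x$ estimate and the induction hypothesis; in either case the Gronwall step closes.
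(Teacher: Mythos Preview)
Your proposal is correct and follows exactly the approach the paper intends: the paper's proof is the single line ``as in Proposition 3.4'', i.e.\ the backward induction with Lemma~3.6 in place of Lemma~3.3 and the same discrete Gronwall recursion $M_{k-1}=M_k(1+c\tau)+c\tau$. You are in fact more careful than the paper: you correctly isolate the one point where the $p$-argument differs from the $x$-argument---in the $x$-case the bound from Lemma~3.3 is uniform in $p$, so monotonicity of $\textnormal{Vex}_p$ suffices, whereas here one needs that $\textnormal{Vex}_p$ does not increase the Lipschitz constant in $p$---and you sketch the standard argument for this via the Carath\'eodory representation of the convex envelope.
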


\subsubsection{H\"older continuity in $t$}

Finally we use the Lipschitz continuity of $V^\tau$ in $x$ to establish the H\"older continuity in time.


\begin{prop}
For all $L\in\mathbb{N}$, $x\in\mathbb{R}^d$, $p\in\Delta(I)$ it holds that $(t_.,x,p)\rightarrow V^\tau(t_{.},x,p)$ is H\"older continuous in $t_.$, in the sense that for all $k\in\{1,\dots,L-1\}, l\in\{1,\ldots L-k\}$, there exists a constant $c$ only depending on the constants of assumption (A), such that
\begin{equation*}
|V^\tau(t_{k+l},x,p)-V^\tau(t_k,x,p)|\leq c |t_{k+l}-t_{k}|^\frac{1}{2}.
\end{equation*}
\end{prop}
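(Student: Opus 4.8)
The plan is to prove the time-regularity estimate by iterating the one-step estimate and using the already-established spatial Lipschitz bound. The key quantitative input is that at each time step the scheme adds an increment of order $\tau$ (controlled via the linear growth bound (3) on $H$ and the uniform Lipschitz constant for $V^\tau$ in $x$ from Proposition 3.5), plus a contribution coming from the fluctuations of the Euler step $\bar X^{k,x}_{k+1}-x = \sigma(t_k,x)\Delta B^k$, whose size in $L^1$ or $L^2$ is of order $\tau^{1/2}$. Since we want a bound of order $|t_{k+l}-t_k|^{1/2} = (l\tau)^{1/2}$ after $l$ steps, the $O(\tau)$ contributions sum to $O(l\tau) \le O((l\tau)^{1/2}\cdot T^{1/2})$, which is fine, while the $\tau^{1/2}$-type contributions are the delicate ones and must not simply add up as $l\cdot\tau^{1/2}$.

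Concretely, I would first record that $\textnormal{Vex}_p$ is $1$-Lipschitz for the sup-norm in the function argument and preserves constants, so in (12) one may drop the convex-hull operation when estimating differences, exactly as in the proofs of Propositions 3.5 and 3.7. Then I would write, for fixed $x,p$ and using Proposition 3.5 together with (3),
\[
\bigl|V^\tau(t_{k},x,p) - \mathbb{E}\bigl[V^\tau(t_{k+1},\bar X^{k,x}_{k+1},p)\bigr]\bigr|
\le \tau\,|H(t_k,x,\bar z_k(x,p),p)| \le c\tau(1+M),
\]
where $M$ is the uniform Lipschitz constant in $x$ and the bound on $\bar z_k$ comes from (16) in the proof of Proposition 3.6. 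Next, using the uniform Lipschitz continuity of $V^\tau(t_{k+1},\cdot,p)$ in $x$ and $\mathbb{E}|\bar X^{k,x}_{k+1}-x| = \mathbb{E}|\sigma(t_k,x)\Delta B^k| \le \|\sigma\|_\infty\,\mathbb{E}|\Delta B^k| \le c\tau^{1/2}$, we get
\[
\bigl|\mathbb{E}\bigl[V^\tau(t_{k+1},\bar X^{k,x}_{k+1},p)\bigr] - V^\tau(t_{k+1},x,p)\bigr| \le M\,\mathbb{E}|\bar X^{k,x}_{k+1}-x| \le cM\,\tau^{1/2}.
\]
Combining the two displays gives a one-step bound $|V^\tau(t_k,x,p) - V^\tau(t_{k+1},x,p)| \le c\,\tau^{1/2}$, but iterating this naively over $l$ steps only yields $cl\tau^{1/2}$, which is too weak.

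To get the correct $\sqrt{l\tau}$ behaviour, the main obstacle is precisely this accumulation: one must not bound the displacement step by step but rather compare $V^\tau(t_{k+l},x,p)$ with $\mathbb{E}[V^\tau(t_{k+l},\bar X^{k}_{k+l},p)]$ over the \emph{whole} block of $l$ steps at once, where $\bar X^{k}_{k+l}$ is the Euler scheme run from $x$ at $t_k$ to $t_{k+l}$. Unrolling (12) $l$ times, the $H$-terms contribute at most $l\cdot c\tau(1+M) = c(1+M)\,l\tau \le c(1+M)\sqrt{T}\,\sqrt{l\tau}$ (here using $l\tau \le T$), which is acceptable; the remaining term is $|\mathbb{E}[V^\tau(t_{k+l},\bar X^{k}_{k+l},p)] - V^\tau(t_{k+l},x,p)| \le M\,\mathbb{E}|\bar X^{k}_{k+l}-x|$, and now the point is that $\bar X^{k}_{k+l}-x = \sum_{j=k}^{k+l-1}\sigma(t_j,\bar X^k_j)\Delta B^j$ is a sum of martingale increments, so by the Cauchy--Schwarz / martingale $L^2$ estimate $\mathbb{E}|\bar X^{k}_{k+l}-x| \le (\mathbb{E}|\bar X^{k}_{k+l}-x|^2)^{1/2} \le \|\sigma\|_\infty (l\tau)^{1/2} = \|\sigma\|_\infty |t_{k+l}-t_k|^{1/2}$. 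One subtlety to handle carefully when unrolling (12): after the first step the function $V^\tau(t_{k+1},\cdot,p)$ must be evaluated at the Euler iterate, and because $V^\tau(t_{k+1},\cdot,p)$ is uniformly Lipschitz with constant bounded independently of $\tau$ (Proposition 3.5), the successive $H$-terms and the convex-hull operations commute with taking expectations in $\bar X$ up to the already-controlled $O(\tau)$ errors, so the telescoping goes through with all constants depending only on the data in Assumption (A). Assembling these pieces yields $|V^\tau(t_{k+l},x,p) - V^\tau(t_k,x,p)| \le c\,|t_{k+l}-t_k|^{1/2}$ with $c$ depending only on the constants of Assumption (A), which is the claim.
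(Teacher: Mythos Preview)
Your proposal is correct and follows essentially the same route as the paper: unroll the recursion (12) over the whole block of $l$ steps, bound each $H$-term by $c\tau(1+M)$ using (3) and the bound $|\bar z_k|\le M$ from (16), and then control the remaining term $\bigl|V^\tau(t_{k+l},x,p)-\mathbb{E}[V^\tau(t_{k+l},\bar X^{k,x}_{k+l},p)]\bigr|$ via the uniform Lipschitz constant in $x$ together with the martingale $L^2$-estimate $\mathbb{E}|\bar X^{k,x}_{k+l}-x|\le c\,|t_{k+l}-t_k|^{1/2}$. The paper disposes of the $\textnormal{Vex}_p$ by the same mechanism you describe (convexity of $V^\tau$ in $p$ for one inequality, monotonicity of $\textnormal{Vex}_p$ for the other), and makes the telescoping precise via the flow property $\bar X^{\,k+1,\bar X^{k,x}_{k+1}}_{k+2}=\bar X^{k,x}_{k+2}$, which is the concrete content of the ``subtlety'' you flag.
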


\begin{proof}
We fix $(x,p)\in\mathbb{R}^d\times\Delta(I)$. By (12), (3) and the convexity of $V^\tau$ in $p$ we have
\begin{eqnarray*}
&&|V^\tau(t_{k+l},x,p)-V^\tau(t_k,x,p)|\\
&&\ \ \ \ =\left|V^\tau(t_{k+l},x,p)-\textnormal{Vex}_p\left(\mathbb{E}\left[V^{\tau}(t_{k+1},\bar X^{k,x}_{{k+1}},p)\right]+\tau H(t_k,x,\bar z_k(x,p),p)\right)\right|\\
&&\ \ \ \ \leq \left|\mathbb{E}\left[V^\tau(t_{k+l},x,p)-V^{\tau}(t_{k+1},\bar X^{k,x}_{{k+1}},p)\right]\right|+c\tau  (1+M),
\end{eqnarray*}
where we used that by (16) $|\bar z_k(x,p)|$ is bounded uniformly in $p\in\Delta(I)$ by the Lipschitz constant of $V^{\tau}$ in $x$. Note that by definition (12)
\begin{eqnarray*}
V^{\tau}(t_{k+1},\bar X^{k,x}_{{k+1}},p)=\textnormal{Vex}_p\bigg(\mathbb{E}\left[V^{\tau}(t_{k+2},\bar X^ {{k+1},{x'}}_{{k+2}},p)\right]+\tau H(r,x',\bar z_{{k+1}}(x',p),p)\bigg)\bigg|_{x'=\bar X^ {k,x}_{{k+1} }}.
\end{eqnarray*}
Hence by (A) and the fact that $V^\tau$ is convex in $p$ we have
\begin{eqnarray*}
&&\left|V^\tau(t_{k+l},x,p)-\mathbb{E}\left[V^{\tau}(t_{k+1},\bar X^{k,x}_{{k+1}},p)\right]\right|\\
&&\ \ \ \ =\bigg| V^\tau(t_{k+l},x,p)\\
&&\ \ \ \ \ \ \ \ \ \ \ \ \ \ -\mathbb{E}\bigg[\textnormal{Vex}_p\bigg(\mathbb{E}\left[V^{\tau}(t_{k+2},\bar X^ {{k+1},{x'}}_{{k+2}},p)\right]+\tau H(r,x',\bar z_{{k+1}}(x',p),p)\bigg)\bigg|_{x'=\bar X^ {k,x}_{{k+1} }}
\bigg]\bigg|\\
&&\ \ \ \ \leq \left|V^\tau(t_{k+l},x,p)-\mathbb{E}\left[V^{\tau}(t_{k+2},\bar X^ {{k+1},\bar X^ {k,x}_{{k+1} }}_{{k+2}},p)\right]\right|+c \tau (1+M)\\
&&\ \ \ \ = \left|V^\tau(t_{k+l},x,p)-\mathbb{E}\left[V^{\tau}(t_{k+2},\bar X^ {k,x}_{{k+2}},p)\right]\right|+c\tau(1+M).
\end{eqnarray*}
Since $l\tau=|t_{k+l}-t_k|$ repeating this now $l-2$ times gives
\begin{eqnarray*}
|V^\tau(t_{k+l},x,p)-V^\tau(t_k,x,p)|&\leq& \left|V^\tau(t_{k+l},x,p)-\mathbb{E}\left[V^{\tau}(t_{k+l},\bar X^ {k,x}_{{k+l}},p)\right]\right|+c(1+M)|t_{k+l}-t_k|.
\end{eqnarray*}

Furthermore by the Lipschitz continutity of $V^\tau$ in $x$ and (A) it holds 
\begin{eqnarray*}
\left|V^\tau(t_{k+l},x,p)-\mathbb{E}\left[V^{\tau}(t_{k+l},\bar X^ {k,x}_{{k+l}},p)\right]\right|\leq  M \mathbb{E}\left[|\bar X^ {k,x}_{{k+l}}-x|\right]
\leq  c |t_{k+l}-t_k|^\frac{1}{2},
\end{eqnarray*}
hence
\begin{eqnarray*}
|V^\tau(t_{k+l},x,p)-V^\tau(t_k,x,p)|\leq M |t_{k+l}-t_k|^\frac{1}{2}+c(1+M)|t_{k+l}-t_k|.
\end{eqnarray*}
\end{proof}

\section{Convergence}

\begin{thm} Under (A) we have uniform convergence on the compact subsets of $[0,T]\times\mathbb{R}^d\times\Delta(I)$, i.e.
\begin{eqnarray}
	\lim_{\tau\downarrow0,t_k\rightarrow t,x'\rightarrow x,p'\rightarrow p} V^\tau(t_k,x',p')=V(t,x,p).
\end{eqnarray}
\end{thm}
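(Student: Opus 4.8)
The plan is to apply the Barles--Souganidis framework for convergence of monotone schemes, adapted to the state-constrained (obstacle) setting. The three ingredients are monotonicity, stability (uniform bounds and equicontinuity), and consistency with the PDE (7). Monotonicity is essentially Lemma 3.2 (up to an error $\mathcal{O}(\tau)$ that is exponentially small in $1/\tau$, hence harmless after summation over the $L=T/\tau$ time steps), combined with the monotonicity of $\mathrm{Vex}_p$. Stability is already in hand: Proposition 3.6 gives a uniform bound on $V^\tau$, Propositions 3.5 and 3.8 give uniform Lipschitz continuity in $x$ and in $p$, and Proposition 3.9 gives uniform $1/2$-H\"older continuity in $t$; hence by Arzel\`a--Ascoli the family $\{V^\tau\}$ is relatively compact in the topology of local uniform convergence, and it suffices to identify every limit point with $V$.

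**Next I would** define the relaxed semi-limits
\[
\overline{V}(t,x,p)=\limsup_{\tau\downarrow 0,\ (t_k,x',p')\to(t,x,p)}V^\tau(t_k,x',p'),\qquad
\underline{V}(t,x,p)=\liminf_{\tau\downarrow 0,\ (t_k,x',p')\to(t,x,p)}V^\tau(t_k,x',p'),
\]
and show that $\overline{V}$ is a viscosity subsolution and $\underline{V}$ a viscosity supersolution of (7), with the correct behaviour at $t=T$. Since the equicontinuity estimates above are uniform in $\tau$, $\overline{V}=\underline{V}=:\widetilde V$ is continuous and the terminal condition $\widetilde V(T,x,p)=\langle p,g(x)\rangle$ passes to the limit from (11) directly. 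For the subsolution property, fix a smooth test function $\varphi$ touching $\overline{V}$ from above at an interior point $(t,x,p)$ (interior in $p$, as emphasised in Remark 2.8); pick points $(t_k,x_k,p_k)$ realising the $\limsup$ at which $V^\tau-\varphi$ has a local max, insert $\varphi$ into the scheme (12)--(13), and use a Taylor expansion of $\mathbb{E}[\varphi(t_{k+1},\bar X^{k,x_k}_{k+1},p_k)]$ in $\tau$: the Euler increment $\sigma(t_k,x_k)\Delta B^k$ produces the term $\tau(\partial_t\varphi+\tfrac12\mathrm{tr}(\sigma\sigma^*D_x^2\varphi))$, while $\bar z_k$ converges to $D_x\varphi$ by the identity already used in (16), so $H(t_k,x_k,\bar z_k,p_k)\to H(t,x,D_x\varphi,p)$ by the Lipschitz estimate (4). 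Since $\mathrm{Vex}_p(\cdot)\le(\cdot)$, dropping the convexification can only help the subsolution inequality; one concludes $\min\{\partial_t\varphi+\tfrac12\mathrm{tr}(\sigma\sigma^*D_x^2\varphi)+H,\ \lambda_{\min}(p,\partial_p^2\varphi)\}\ge 0$ at $(t,x,p)$, where the obstacle part is automatic because $\overline V$, being a local uniform limit of functions convex in $p$ (the scheme outputs $\mathrm{Vex}_p$), is itself convex in $p$. For the supersolution property one argues symmetrically: at a point where $\varphi$ touches $\underline V$ from below, if $\lambda_{\min}(p,\partial_p^2\varphi)<0$ the obstacle term already gives the required $\le 0$; otherwise $\varphi$ is (locally, up to a harmless modification) convex in $p$ near $p$, so $\mathrm{Vex}_p$ applied to $\varphi$-plus-scheme-correction does not lower its value at the base point, and the same Taylor expansion yields $\partial_t\varphi+\tfrac12\mathrm{tr}(\sigma\sigma^*D_x^2\varphi)+H\le \mathcal{O}(\tau)\to 0$.

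**Finally**, having shown $\overline V$ is a subsolution and $\underline V$ a supersolution of (7) with the same terminal data, I invoke the comparison principle for this obstacle problem — which is exactly the uniqueness statement in Theorem 2.7 from \cite{Ca},\cite{CaRa2} — to get $\overline V\le V\le \underline V$; combined with the trivial $\underline V\le\overline V$ this forces $\overline V=\underline V=V$, which is precisely the claimed local uniform convergence (17).

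**The main obstacle** I expect is the careful bookkeeping around the convexification operator $\mathrm{Vex}_p$ in the consistency step: one must check that taking the convex hull in $p$ at each step is compatible with the one-sided test-function inequalities, i.e. that it produces exactly the obstacle $\lambda_{\min}(p,\partial_p^2 w)\ge 0$ in the limit and never destroys the subsolution inequality, and this interacts delicately with the state-constraint boundary $\partial\Delta(I)$ (Remark 2.8) where test functions may only touch from one side. A secondary technical point is controlling the $x$-dependence inside the nested $\mathrm{Vex}_p$ when expanding the scheme (the term $\bar z_{k+1}(x',p)$ with $x'=\bar X^{k,x}_{k+1}$), but the uniform Lipschitz bounds of Section 3 make this routine rather than essential.
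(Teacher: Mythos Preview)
Your overall architecture---stability via Propositions~3.5--3.9, Arzel\`a--Ascoli, consistency of the scheme with (7), and comparison from Theorem~2.7---is exactly what the paper does, and your subsolution argument matches Proposition~4.4 essentially line for line.

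The gap is in the supersolution step. You write that when $\lambda_{\min}(p,\partial_p^2\varphi)>0$, after a ``harmless modification'' making $\varphi$ convex in $p$, the operator $\mathrm{Vex}_p$ applied to $\varphi$-plus-scheme-correction does not lower its value at the base point. This is not justified. Even after modifying $\varphi$ so that $p\mapsto\mathbb{E}[\varphi^\tau(t_{k+1},\bar X,p)]$ is globally convex on $\Delta(I)$, the function you actually convexify is
\[
F(p)=\mathbb{E}\big[\varphi^\tau(t_{k+1},\bar X,p)\big]+\tau\,H\big(t_k,x,\bar z_k^{\varphi^\tau}(x,p),p\big),
\]
and the second summand is merely Lipschitz in $p$, not convex. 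If the first summand is only convex (not \emph{strongly} convex) away from $\bar p_k$---which is all you can guarantee while keeping $\varphi^\tau\le V^\tau$ globally, since $V^\tau$ itself is only convex---then a $c\tau$-Lipschitz perturbation can make $\mathrm{Vex}_pF(\bar p_k)$ differ from $F(\bar p_k)$ by an amount of order $\tau$, not $o(\tau)$, which kills the consistency inequality. Local strict convexity of $\varphi$ near $\bar p_k$ does not help, because $\mathrm{Vex}_p$ is a global operation on $\Delta(I)$.

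The paper resolves this with a genuinely different device you are missing: the one-step a~posteriori martingale $\mathbf{p}_{k+1}$ of Definition~4.2 and Lemma~4.3, which represents $\mathrm{Vex}_p$ at $\bar p_k$ as an expectation over random $p$-values. In Step~1 of Proposition~4.5 the paper upgrades the local strict-convexity bound to the global quadratic lower bound~(28) by leaning on the convexity of $V^\tau$ itself (not of $\varphi$); Step~2 then uses (28) inside the dynamic programming identity~(21) to prove the quantitative concentration estimate $\mathbb{E}[|\mathbf{p}_{k+1}-\bar p_k|^2]\le c\tau$; only with this in hand (Step~3) can one control the drift of the $p$-argument through the non-convex $H$ term and recover the supersolution inequality. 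Your sketch contains no analogue of this concentration estimate, and without it the supersolution half of the argument does not close.
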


Note that by Proposition 3.5. the family $(V^\tau,\tau>0)$ is uniformly bounded. Furthermore by Proposition 3.4., 3.7. and 3.8. the family $(V^\tau,\tau>0)$ is equicontinuous, hence by Arzela Ascoli compact for the topology of uniform convergence. Furthermore any candidate for the limit of $V^\tau$ as $\tau\downarrow0$ is as a limit of convex functions  convex in $p$.\\
Let $w:[0,T]\times\mathbb{R}^d\times\Delta(I)\rightarrow\mathbb{R}$ be a candidate for the limit. We will show that $w$ is a viscosity solution to (7). Since this property uniquely characterizes the value function $V$ the convergence follows immediately.

\subsection{One step a posteriori martingales and DPP}

By construction there exists at each time step $t_k$ for any $x\in\mathbb{R}^d$ and $p\in\Delta(I)$ a linear combination of $\pi^{k,1}(x,p),\ldots ,$ $\pi^{k,I}(x,p)\in\Delta(I)$ such that
\begin{eqnarray}
\sum_{l=1}^I \lambda_l^k(x,p)\pi^{k,l}(x,p)=p\ \ \ \ \  \sum_{l=1}^I \lambda_l^k(x,p)=1
\end{eqnarray}
and
\begin{equation}
\begin{array}{rcl}
&&V^\tau(t_{k},x,p)\\
 \ \\
&&\ \ \ \  =\sum_{l=1}^I \lambda_l^k(x,p)\left(\mathbb{E}\left[V^{\tau}(t_{k+1},\bar X^{k,x}_{{k+1}},\pi^{k,l}(x,p))\right]+\tau H(t_k,x,\bar z_k(x,\pi^{k,l}(x,p)),\pi^{k,l}(x,p))\right)
\end{array}
\end{equation}
with
\begin{eqnarray}
\bar z_k(x,\pi^{k,l}(x,p))=\frac{1}{\tau}\mathbb{E}\left[V^{\tau}(t_{k+1},\bar X^{k,x}_{{k+1}},\pi^{k,l}(x,p)) (\sigma^*)^{-1}(t_{k},x)\Delta B^{k}\right],
\end{eqnarray}
where we can choose $(x,p)\rightarrow \lambda^k(x,p)\in\Delta(I)$ and $(x,p)\rightarrow \pi^{k}(x,p)\in\Delta(I)^I$ Borel measurable.\\

\begin{defi} For all $i\in I$, $k=0,\ldots,L$, $x\in\mathbb{R}^n$ and $p\in\Delta(I)$ we define the one step feedbacks $\bold{p}^{i,x,p}_{k+1}$ as  $\Delta(I)$-valued random variables which are independent of $\sigma(B_s)_{s\in\mathbb{R}}$, such that
\begin{itemize}
\item [(i)] for $k=0,\ldots,L-1$
			\begin{itemize}
			\item [(a)] if $p_i=0$ set $\bold{p}^{i,x,p}_{k+1}={p}$
			\item [(b)]  if ${p}_i>0$:
					$\bold{p}^{i,x,p}_{k+1}\in\{\pi^{k,1}(x,p),\ldots ,\pi^{k,I}(x,p)\}$ with probability
					\begin{eqnarray*}
					&&\mathbb{P}\left[\bold{p}^{i,x,p}_{k+1}=\pi^{k,l}(x,p)|(\bold{p}^{j,x',p'}_{l})_{j\in\{1,\ldots,I\},x'\in\mathbb{R},p'\in\Delta{I},l\in\{1,\ldots,k\}}\right] =\lambda_l^k(x,p)\frac{(\pi^{k,l}(x,p))_i}{p_i}
					\end{eqnarray*}
			\end{itemize}
\item [(ii)] for $k=L$ set $\bold{p}^{i,x,p}_{L+1}=e^i$.
\end{itemize}
Furthermore we define one step a posteriori martingales $\bold{p}^{x,p}_{k+1}=\bold{p}^{\bold{i},x,p}_{k+1}$, where the index $\bold{i}$ is a random variable with law $p$, independent of $\sigma(B_s)_{s\in[0,T]}$ and $(\bold{p}^{j,x',p'}_{l})_{j\in\{1,\ldots,I\},x'\in\mathbb{R},p'\in\Delta{I},l\in\{1,\ldots,L\}}$. The martingale property is a direct consequence of the proof of the Lemma given below.
\end{defi}


The following one step dynamic programming is a direct consequence of Definition 4.2.

\begin{lem}
For all ${k}=0,\ldots, L-1$, $x\in\mathbb{R}^d$, $p\in\Delta(I)$ we have
\begin{equation}
\begin{array}{rcl}
&&V^\tau(t_{k},x,p)=\mathbb{E}\left[V^{\tau}(t_{k+1},\bar X^{k,x}_{{k+1}},\bold p^{x,p}_{k+1})+\tau H(t_k,x,\bar z_{k}(x,\bold p^{x,p}_{k+1}),\bold p^{x,p}_{k+1})\right]
\end{array}
\end{equation}
with
\begin{eqnarray}
\bar z_{k}(x,\bold p^{x,p}_{k+1})=\frac{1}{\tau}\mathbb{E}\left[V^{\tau}(t_{k+1},\bar X^{k,x}_{{k+1}},p) (\sigma^*)^{-1}(t_{k},x)\Delta B^{k}\right]\bigg|_{p=\bold p^{x,p}_{k+1}}.
\end{eqnarray}
\end{lem}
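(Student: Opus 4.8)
The plan is to unwind Definition 4.2 directly, using the tower property of conditional expectation to reduce the claimed one-step dynamic programming principle to the finite convex-combination identity (20). First I would fix $k\in\{0,\ldots,L-1\}$, $x\in\mathbb{R}^d$, $p\in\Delta(I)$ and split into the trivial coordinates and the active ones. If $p_i=0$ the feedback $\mathbf{p}^{i,x,p}_{k+1}$ is the constant $p$, but since $\mathbf{i}$ has law $p$ such indices are attained with probability zero and contribute nothing; so the whole expectation is carried by the event $\{\mathbf i = i\}$ with $p_i>0$. On that event $\mathbf p^{x,p}_{k+1}=\mathbf p^{i,x,p}_{k+1}$ takes the value $\pi^{k,l}(x,p)$ with (conditional, hence unconditional, since the conditioning $\sigma$-field is independent of everything appearing) probability $\lambda^k_l(x,p)\,(\pi^{k,l}(x,p))_i/p_i$.

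Next I would compute $\mathbb{E}[\,F(\bar X^{k,x}_{k+1},\mathbf p^{x,p}_{k+1})\,]$ for $F(y,q):=V^\tau(t_{k+1},y,q)+\tau H(t_k,x,\bar z_k(x,q),q)$, where $\bar z_k(x,q)$ is as in (21), the key point being that $\mathbf p^{x,p}_{k+1}$ is independent of $\sigma(B_s)_{s\in\mathbb{R}}$ and hence of $\bar X^{k,x}_{k+1}$ and $\Delta B^k$. Conditioning first on $\mathbf i$ and on the family $(\mathbf p^{j,x',p'}_l)_{l\le k}$, and using independence to pull $\bar X^{k,x}_{k+1}$ out, the expectation becomes
\begin{equation*}
\sum_{i:\,p_i>0} p_i \sum_{l=1}^I \lambda^k_l(x,p)\,\frac{(\pi^{k,l}(x,p))_i}{p_i}\Big(\mathbb{E}\big[V^\tau(t_{k+1},\bar X^{k,x}_{k+1},\pi^{k,l}(x,p))\big]+\tau H(t_k,x,\bar z_k(x,\pi^{k,l}(x,p)),\pi^{k,l}(x,p))\Big).
\end{equation*}
Summing over $i$ first collapses $\sum_{i}(\pi^{k,l}(x,p))_i=1$ because $\pi^{k,l}(x,p)\in\Delta(I)$ (the $p_i$'s cancel, which is exactly why the weights were defined with that ratio), leaving $\sum_{l=1}^I\lambda^k_l(x,p)(\cdots)$, which is precisely the right-hand side of (20), and equals $V^\tau(t_k,x,p)$. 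This gives (22); the representation (23) for $\bar z_k(x,\mathbf p^{x,p}_{k+1})$ is just the statement that one evaluates the linear functional $q\mapsto \frac1\tau\mathbb{E}[V^\tau(t_{k+1},\bar X^{k,x}_{k+1},q)(\sigma^*)^{-1}(t_k,x)\Delta B^k]$ at the random point $q=\mathbf p^{x,p}_{k+1}$ after conditioning, again by the same independence.

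The main obstacle is bookkeeping rather than mathematics: one must be careful that conditioning on the large family $(\mathbf p^{j,x',p'}_l)_{j,x',p',l\le k}$ does not disturb the computation, which holds because that family is independent of $\sigma(B_s)$ and because the conditional law of $\mathbf p^{i,x,p}_{k+1}$ given it is the prescribed one; and one must handle the $p_i=0$ coordinates cleanly (they are null for the law of $\mathbf i$, so the restriction $p_i>0$ in the sums is harmless). I would also remark, as the statement after Definition 4.2 anticipates, that the computation $\mathbb{E}[\mathbf p^{x,p}_{k+1}\mid \mathcal F_k]=\sum_l \lambda^k_l(x,p)\pi^{k,l}(x,p)=p$ — obtained by the same index swap with $(\pi^{k,l}(x,p))_i$ summed against $p_i\cdot(\pi^{k,l})_i/p_i$ over $l$ and then over $i$ — is exactly the martingale property of the a posteriori process, so it drops out of the present argument for free.
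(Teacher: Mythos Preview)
Your proposal is correct and follows essentially the same route as the paper's proof: split the expectation over the values of $\mathbf i$, use the prescribed conditional law of $\mathbf p^{i,x,p}_{k+1}$ together with independence from the Brownian increments, and collapse $\sum_i (\pi^{k,l}(x,p))_i=1$ to recover the convex combination (19). You are in fact slightly more careful than the paper, which simply assumes $p_i>0$ for all $i$, whereas you dispose of the null coordinates explicitly.
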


\begin{proof}
Assume $(p)_i>0$ for all $i=1,\ldots,I$. By the construction for all suitable functions $f:\Delta(I)\rightarrow\mathbb{R}$ it holds
\begin{eqnarray*}
&&\mathbb{E}[f(\bold p^{x,p}_{k+1})]=\sum_{i=1}^I\mathbb{E} \left[ 1_{\{\bold i=i\}} f(\bold p^{i,x,p}_{k+1})\right]
=\sum_{i=1}^I\mathbb{E} \left[ 1_{\{\bold i=i\}} \right]\mathbb{E} \left[f(\bold p^{i,x,p}_{k+1})\right]\\
&&\ \ \ \ =\sum_{i=1}^I p_i \sum_{l=1}^I\lambda_l^k(x,p)\frac{(\pi^{k,l}(x,p))_i}{p_i} f((\pi^{k,l}(x,p)))\\
&&\ \ \ \ =\sum_{l=1}^I \lambda_l^k(x,p)f(\pi^{k,l}(x,p))
\end{eqnarray*}
and the Lemma follows with (19).
\end{proof}

\subsection{Viscosity solution property}


\subsubsection{Viscosity subsolution property of $w$}
\begin{prop}
$w$ is a viscosity subsolution of (7) on $[0,T]\times\mathbb{R}^d\times\textnormal{Int}(\Delta(I)).$
\end{prop}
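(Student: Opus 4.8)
The plan is to run the Barles--Souganidis programme: monotonicity (Lemma 3.2) and stability (Propositions 3.4--3.5, 3.7--3.8) are already in place, and what remains is to turn the consistency of the scheme into the two halves of the subsolution inequality for (7). So let $\varphi$ be a test function ($C^1$ in $t$, $C^2$ in $(x,p)$) such that $w-\varphi$ has a strict maximum at some $(\bar t,\bar x,\bar p)$ with $\bar t<T$ and $\bar p\in\textnormal{Int}(\Delta(I))$, normalised so that $(w-\varphi)(\bar t,\bar x,\bar p)=0$. Since $w$ and all the $V^\tau$ are bounded and $V^\tau$ is uniformly Lipschitz in $x$, a standard modification of $\varphi$ away from $(\bar t,\bar x,\bar p)$ lets me assume in addition that $\varphi$ is bounded, globally Lipschitz in $x$, and that the maximum is global. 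Along the subsequence defining $w$ one has $V^\tau\to w$ locally uniformly, so there are grid times $t_{k_\tau}$ and points $(x_\tau,p_\tau)$ with $(t_{k_\tau},x_\tau,p_\tau)\to(\bar t,\bar x,\bar p)$ realising $\max(V^\tau-\varphi)$, with $m_\tau:=(V^\tau-\varphi)(t_{k_\tau},x_\tau,p_\tau)\to0$, and with $t_{k_\tau}<T$ and $p_\tau\in\textnormal{Int}(\Delta(I))$ for $\tau$ small.

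First I dispose of the obstacle term, which is the easy half. Each $V^\tau$ is convex in $p$, hence so is $w$. Fix $z\in T_{\Delta(I)}(\bar p)\setminus\{0\}$; as $\bar p$ is interior, $\bar p\pm hz\in\Delta(I)$ for small $h$, and $h\mapsto w(\bar t,\bar x,\bar p+hz)$ is convex, so near $0$ it is bounded below by an affine function $\ell$ with $\ell(0)=w(\bar t,\bar x,\bar p)$. Then $h\mapsto\varphi(\bar t,\bar x,\bar p+hz)-\ell(h)\geq w(\bar t,\bar x,\bar p+hz)-\ell(h)\geq0$ vanishes at $h=0$, so its (smooth) second derivative there is nonnegative, i.e.\ $\langle D^2_p\varphi(\bar t,\bar x,\bar p)z,z\rangle\geq0$. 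As $z$ was arbitrary, $\lambda_{\min}\big(\bar p,\tfrac{\partial^2\varphi}{\partial p^2}(\bar t,\bar x,\bar p)\big)\geq0$.

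For the parabolic term, write $S^\tau\phi(t_k,x,p)$ for the quantity $\mathbb{E}[\phi(t_{k+1},\bar X^{k,x}_{k+1},p)]+\tau H(t_k,x,\tfrac1\tau\mathbb{E}[\phi(t_{k+1},\bar X^{k,x}_{k+1},p)(\sigma^*)^{-1}(t_k,x)\Delta B^k],p)$ of which the scheme (12) takes the $\textnormal{Vex}_p$; in particular $V^\tau(t_{k_\tau},x_\tau,p_\tau)\leq S^\tau V^\tau(t_{k_\tau},x_\tau,p_\tau)$. Applying Lemma 3.2 with $\phi=\varphi(t_{k_\tau+1},\cdot,p_\tau)+m_\tau$ and $\psi=V^\tau(t_{k_\tau+1},\cdot,p_\tau)$ (admissible because $0\leq\phi-\psi$ by global maximality and both are bounded and Lipschitz in $x$), and using that adding the constant $m_\tau$ commutes with $S^\tau$ since $\mathbb{E}[(\sigma^*)^{-1}(t_{k_\tau},x_\tau)\Delta B^{k_\tau}]=0$, I get
\[ S^\tau\varphi(t_{k_\tau},x_\tau,p_\tau)+m_\tau\ \geq\ S^\tau V^\tau(t_{k_\tau},x_\tau,p_\tau)-\tau\,\mathcal{O}(\tau)\ \geq\ V^\tau(t_{k_\tau},x_\tau,p_\tau)-\tau\,\mathcal{O}(\tau)\ =\ \varphi(t_{k_\tau},x_\tau,p_\tau)+m_\tau-\tau\,\mathcal{O}(\tau), \]
that is, $S^\tau\varphi(t_{k_\tau},x_\tau,p_\tau)-\varphi(t_{k_\tau},x_\tau,p_\tau)\geq-\tau\,\mathcal{O}(\tau)$. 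Now a direct Taylor expansion of the \emph{smooth} function $\varphi$ in the Gaussian increment, using $\mathbb{E}[\Delta B^{k_\tau}]=0$, $\mathbb{E}[\Delta B^{k_\tau}(\Delta B^{k_\tau})^*]=\tau I$, the vanishing of odd Gaussian moments and the exponential smallness of the tails, gives the consistency identity
\[ S^\tau\varphi(t_{k_\tau},x_\tau,p_\tau)-\varphi(t_{k_\tau},x_\tau,p_\tau)=\tau\Big(\partial_t\varphi+\tfrac12\textnormal{tr}(\sigma\sigma^*(t_{k_\tau},x_\tau)D_x^2\varphi)+H(t_{k_\tau},x_\tau,D_x\varphi,p_\tau)\Big)(t_{k_\tau},x_\tau,p_\tau)+o(\tau), \]
the crucial point being $\tfrac1\tau\mathbb{E}[\varphi(t_{k_\tau+1},\bar X^{k_\tau,x_\tau}_{k_\tau+1},p_\tau)(\sigma^*)^{-1}(t_{k_\tau},x_\tau)\Delta B^{k_\tau}]=D_x\varphi(t_{k_\tau},x_\tau,p_\tau)+o(1)$. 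Dividing by $\tau$, letting $\tau\downarrow0$, and using $\mathcal{O}(\tau)\to0$, the continuity of $\sigma$ and $H$, and $(t_{k_\tau},x_\tau,p_\tau)\to(\bar t,\bar x,\bar p)$, I obtain $\partial_t\varphi+\tfrac12\textnormal{tr}(\sigma\sigma^*D_x^2\varphi)+H(\bar t,\bar x,D_x\varphi,\bar p)\geq0$ at $(\bar t,\bar x,\bar p)$. Combined with the previous paragraph this gives $\min\{\cdots\}\geq0$, the subsolution property.

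The step I expect to be the real obstacle is precisely the term $H(\cdot,\bar z_k,\cdot)$: since $\bar z_k$ is built from $V^\tau$, which is only Lipschitz (not $C^1$) in $x$, one cannot argue that $\bar z_{k_\tau}(x_\tau,p_\tau)$ converges to a gradient of $w$ — which need not be differentiable at $(\bar t,\bar x,\bar p)$. The device that removes this is to use the monotonicity of the scheme, Lemma 3.2, to pass from $V^\tau$ to the \emph{smooth} $\varphi$ \emph{before} any expansion, so that only $\bar z$ evaluated at $\varphi$ enters, where consistency is elementary. The remaining care concerns the unboundedness of the Gaussian increments (absorbed into the $\mathcal{O}(\tau)$ of Lemma 3.2 and into the reduction to a bounded, globally Lipschitz test function) and the exclusion of the boundary cases $\bar t=T$ (terminal condition) and $\bar p\in\partial\Delta(I)$ — the latter being why the statement is restricted to $\textnormal{Int}(\Delta(I))$.
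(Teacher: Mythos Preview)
Your proof is correct and follows essentially the same route as the paper: handle the obstacle term via the convexity of $w$ in $p$ (inherited from the $V^\tau$), drop the $\textnormal{Vex}_p$ in (12) to get $V^\tau\le S^\tau V^\tau$, invoke the monotonicity Lemma~3.2 to replace $V^\tau$ by the shifted test function $\varphi+m_\tau$ (the paper's $\phi^\tau$), and then Taylor-expand the smooth function. Your write-up is in fact more explicit than the paper's on several points---the verification that $0\le(\varphi+m_\tau)-V^\tau$ and boundedness hold so that Lemma~3.2 applies, the observation that adding a constant commutes with $S^\tau$, and the consistency computation for $\bar z$---but the architecture is identical.
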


\begin{proof}
Let $\phi:[0,T]\times\mathbb{R}\times \Delta(I)\rightarrow\mathbb{R}$ be a test function such that $w-\phi$ has a strict global maximum at $(\bar{t},\bar x,\bar p)$, where $\bar p\in\textnormal{Int}(\Delta(I))$. We have to show, that
\begin{eqnarray}
	\min\bigg\{\frac{\partial \phi}{\partial t}+\frac{1}{2}\textnormal{tr}(\sigma\sigma^*(t,x)D_x^2\phi)+H(t,x,D_x\phi,p), \lambda_{\min} \left(p,\frac{\partial^2 \phi}{\partial p^2}\right)\bigg\}\geq 0
\end{eqnarray}
holds at $(\bar{t},\bar x,\bar p)$. As a limit of convex functions $w$ is convex in $p$ and we have since $\bar p\in\textnormal{Int}(\Delta (I))$
 \[\lambda_{\min}\left(\bar p,\frac{\partial ^2 \phi}{\partial p^2}(\bar{t},\bar x,\bar p)\right)\geq 0.\]\\
So it remains to show
\begin{eqnarray}
	\frac{\partial \phi}{\partial t}+\frac{1}{2}\textnormal{tr}(\sigma\sigma^*(t,x)D_x^2\phi)+H(t,x,D_x\phi,p)\geq 0.
\end{eqnarray}
Note that by standard arguments (e.g. \cite{Bar}) there exists a sequence $(\bar t_k,\bar x_k,\bar p_k)_{k\in\mathbb{N}}$ such that $\bar t_{k}=l_k \frac{T}{k}= l_k \tau \in\Pi^\tau$ converges to $\bar t$ and $(\bar x_k,\bar p_k)$ converge to $(\bar x,\bar p)$ and such that $V^\tau-\phi$ has a global maximum at $(\bar{t}_{k},\bar x_k,\bar p_k)$.\\
Define $\phi^\tau=\phi+(V^\tau(\bar{t}_{k},\bar x_k,\bar p_k)-\phi(\bar{t}_{k},\bar x_k,\bar p_k))=\phi+\Delta_\tau$. Hence for all $x\in\mathbb{R}, p\in \Delta(I)$
\begin{eqnarray*}
V^\tau(\bar t_{k}+\tau,x,p)-\phi^\tau(\bar t_{k}+\tau,x,p)\leq V^\tau(\bar t_{k},\bar x_k,\bar p_k)-\phi^\tau(\bar t_{k},\bar x_k,\bar p_k)=0.
\end{eqnarray*}
Set
\begin{eqnarray*}
\bar X_{k+1}=\bar x_k+\sigma(\bar t_k,\bar x_k) \Delta B^{l_k}
\end{eqnarray*}
and
\begin{eqnarray*}
\bar z_{k}=\frac{1}{\tau}\mathbb{E}\left[V^{\tau}(\bar t_{k}+\tau,\bar X_{k+1},\bar p_{k}) (\sigma^*)^{-1}(\bar t_k,\bar x_k)\Delta B^{l_k}\right].
\end{eqnarray*}
By the definition of $V^\tau$ (12) it holds 
\begin{eqnarray*}
0&=&\textnormal{Vex}_p\left(\mathbb{E}\left[V^{\tau}(\bar t_{k}+\tau,\bar X_{k+1}, \bar p_k)+\tau H(\bar t_{k},\bar x_k,\bar z_{k},\bar p_k)ds\right]\right)-V^{\tau}(\bar t_{k},\bar x_k,\bar p_k)\\
&\leq& \mathbb{E}\left[V^{\tau}(\bar t_{k}+\tau,\bar X_{k+1},\bar p_k)\right]+\tau H(\bar t_{k},\bar x_k,\bar z_{k},\bar p_k)-V^{\tau}(\bar t_{k},\bar x_k,\bar p_k).
\end{eqnarray*}
Hence by the monotonicity Lemma 3.2. we have for all $\tau>0$
\begin{eqnarray*}
0&\leq& \mathbb{E}\left[V^{\tau}(\bar t_{k}+\tau,\bar X_{k+1},\bar p_k)+\tau H(\bar t_{k},\bar x_k,\frac{1}{\tau}\mathbb{E}\left[V^{\tau}(\bar t_{k}+\tau,\bar X_{k+1},\bar p_k) (\sigma^*)^{-1}(\bar t_k,\bar x_k) \Delta B^{l_k}\right],\bar p_k)\right]\\
&&\ \ \ \ \ \ \ \ -V^{\tau}(\bar t_{k},\bar x_k,\bar p_k)\\
&\leq& \mathbb{E}\left[\phi^{\tau}(\bar t_{k}+\tau,\bar X_{k+1},\bar p_k)+\tau H(\bar t_{k},\bar x_k,\frac{1}{\tau}\mathbb{E}\left[\phi^{\tau}(\bar t_{k}+\tau,\bar X_{k+1},\bar p_k)  (\sigma^*)^{-1}(\bar t_k,\bar x_k)\Delta B^{l_k}\right],\bar p_k)\right]\\
&&\ \ \ \ \ \ \ \ -\phi^{\tau}(\bar t_{k},\bar x_k,\bar p_k)+\tau \mathcal{O}(\tau).
\end{eqnarray*}
By expansion of the smooth function $\phi^\tau$ we have since $\phi^\tau$ is equal to $\phi$ with the linear shift $\Delta_\tau$ the inequality (24).
\end{proof}

\subsubsection{Viscosity supersolution property of $w$}

\begin{prop}
$w$ is a viscosity supersolution of (7) on $[0,T]\times\mathbb{R}^d\times\Delta(I).$
\end{prop}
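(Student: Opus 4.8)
The plan is to run the argument of Proposition 4.4 in the opposite direction, now starting from the \emph{exact} one-step dynamic programming principle of Lemma 4.3 (which is available precisely because it encodes the convexification), rather than from the inequality $\mathrm{Vex}_p(f)\le f$. Let $\phi$ be a test function such that $w-\phi$ has a strict global minimum at $(\bar t,\bar x,\bar p)$, where now $\bar p$ is allowed to lie on $\partial\Delta(I)$, since by Remark 2.7 the supersolution property is required on the whole simplex. We must show
\[
\min\Bigl\{\tfrac{\partial\phi}{\partial t}+\tfrac12\mathrm{tr}\bigl(\sigma\sigma^*(\bar t,\bar x)D_x^2\phi\bigr)+H(\bar t,\bar x,D_x\phi,\bar p),\ \lambda_{\min}\bigl(\bar p,\tfrac{\partial^2\phi}{\partial p^2}(\bar t,\bar x,\bar p)\bigr)\Bigr\}\le 0 .
\]
If $\lambda_{\min}\bigl(\bar p,\tfrac{\partial^2\phi}{\partial p^2}(\bar t,\bar x,\bar p)\bigr)\le 0$ there is nothing to do, so from now on I assume $\lambda_{\min}\bigl(\bar p,\tfrac{\partial^2\phi}{\partial p^2}(\bar t,\bar x,\bar p)\bigr)>0$ and aim to prove that the first entry of the minimum is $\le 0$ at $(\bar t,\bar x,\bar p)$.

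By continuity there are $\delta>0$ and a neighbourhood of $(\bar t,\bar x,\bar p)$ on which $\lambda_{\min}\bigl(p,\tfrac{\partial^2\phi}{\partial p^2}(t,x,p)\bigr)\ge\delta$. Since the minimum is strict and $w$ is convex in $p$, I may modify $\phi$ away from $(\bar t,\bar x,\bar p)$ (also making it bounded) so that, for $(t,x)$ near $(\bar t,\bar x)$, the map $p\mapsto\phi(t,x,p)$ is convex on all of $\Delta(I)$ and still has curvature $\ge\delta$ along $T_{\Delta(I)}(p)$ on the above neighbourhood. Exactly as in the proof of Proposition 4.4 and by the standard argument of \cite{Bar}, there are grid points $(\bar t_k,\bar x_k,\bar p_k)$ with $\bar t_k=l_k\tau\in\Pi^\tau$, converging to $(\bar t,\bar x,\bar p)$, at which $V^\tau-\phi$ has a global minimum; setting $\phi^\tau=\phi+\Delta_\tau$ with $\Delta_\tau=V^\tau(\bar t_k,\bar x_k,\bar p_k)-\phi(\bar t_k,\bar x_k,\bar p_k)$ gives $V^\tau(\bar t_k,\bar x_k,\bar p_k)=\phi^\tau(\bar t_k,\bar x_k,\bar p_k)$ and $V^\tau(\bar t_k+\tau,\cdot,\cdot)\ge\phi^\tau(\bar t_k+\tau,\cdot,\cdot)$ everywhere.

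Now apply Lemma 4.3 at $(\bar t_k,\bar x_k,\bar p_k)$: with $\bar X_{k+1}=\bar x_k+\sigma(\bar t_k,\bar x_k)\Delta B^{l_k}$ and $\mathbf p_{k+1}:=\mathbf p^{\bar x_k,\bar p_k}_{k+1}$ the one-step a posteriori martingale of Definition 4.2 (independent of $B$, with $\mathbb E[\mathbf p_{k+1}]=\bar p_k$, and, when $\bar p\in\partial\Delta(I)$, supported in the face of $\Delta(I)$ through $\bar p$ by Definition 4.2(i)(a)), the identity $\phi^\tau(\bar t_k,\bar x_k,\bar p_k)=\mathbb E[V^\tau(\bar t_k+\tau,\bar X_{k+1},\mathbf p_{k+1})+\tau H(\bar t_k,\bar x_k,\bar z_k(\bar x_k,\mathbf p_{k+1}),\mathbf p_{k+1})]$ holds. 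Conditioning on $\mathbf p_{k+1}$ (legitimate since $\mathbf p_{k+1}$ is independent of $B$), using $V^\tau\ge\phi^\tau$ at time $\bar t_k+\tau$ and the monotonicity Lemma 3.2 to exchange $V^\tau$ for $\phi^\tau$ in the $\bar z_k$-slot of $H$ too, one obtains
\[
\phi^\tau(\bar t_k,\bar x_k,\bar p_k)\ \ge\ \mathbb E\Bigl[\phi^\tau(\bar t_k+\tau,\bar X_{k+1},\mathbf p_{k+1})+\tau H\bigl(\bar t_k,\bar x_k,\tilde z_k(\bar x_k,\mathbf p_{k+1}),\mathbf p_{k+1}\bigr)\Bigr]-\tau\,\mathcal O(\tau),
\]
with $\tilde z_k$ equal to $\bar z_k$ with $\phi^\tau$ in place of $V^\tau$. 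Since $\phi^\tau$ is smooth, a Taylor expansion in $x$ (the odd moments of $\Delta B^{l_k}$ vanishing) gives, conditionally on $\mathbf p_{k+1}=\pi$, that $\mathbb E[\phi^\tau(\bar t_k+\tau,\bar X_{k+1},\pi)]=\phi^\tau(\bar t_k+\tau,\bar x_k,\pi)+\tfrac\tau2\mathrm{tr}\bigl(\sigma\sigma^*(\bar t_k,\bar x_k)D_x^2\phi^\tau(\bar t_k+\tau,\bar x_k,\pi)\bigr)+o(\tau)$ and $\tilde z_k(\bar x_k,\pi)=D_x\phi^\tau(\bar t_k+\tau,\bar x_k,\pi)+o(1)$.

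The remaining, and main, difficulty is to undo the expectation over $\mathbf p_{k+1}$: unlike $\bar X_{k+1}-\bar x_k$, the increment $\mathbf p_{k+1}-\bar p_k$ is not of order $\sqrt\tau$ a priori. This is where the convexity in $p$ of $\phi^\tau$, together with $\mathbb E[\mathbf p_{k+1}]=\bar p_k$, enters: global convexity and the $\delta$-curvature near $\bar p_k$ yield, via Jensen, first a self-improving bound $\mathbb E|\mathbf p_{k+1}-\bar p_k|^2=\mathcal O(\tau/\delta)$ and then $\mathbb E_\pi[\phi^\tau(\bar t_k+\tau,\bar x_k,\pi)]\ge\phi^\tau(\bar t_k+\tau,\bar x_k,\bar p_k)+\tfrac\delta2\mathbb E|\mathbf p_{k+1}-\bar p_k|^2$; and since $p\mapsto\mathrm{tr}(\sigma\sigma^*D_x^2\phi^\tau)$ and $p\mapsto H(\bar t_k,\bar x_k,D_x\phi^\tau,p)$ are Lipschitz (the latter by (4)), the associated $p$-remainders are bounded by $C\tau\,\mathbb E|\mathbf p_{k+1}-\bar p_k|\le\tfrac\delta2\mathbb E|\mathbf p_{k+1}-\bar p_k|^2+\mathcal O(\tau^2/\delta)$ by Young's inequality. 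Substituting and dropping the non-negative term $\tfrac\delta2\mathbb E|\mathbf p_{k+1}-\bar p_k|^2$ leaves
\[
\phi^\tau(\bar t_k,\bar x_k,\bar p_k)-\phi^\tau(\bar t_k+\tau,\bar x_k,\bar p_k)\ \ge\ \tfrac\tau2\mathrm{tr}\bigl(\sigma\sigma^*(\bar t_k,\bar x_k)D_x^2\phi(\bar t_k+\tau,\bar x_k,\bar p_k)\bigr)+\tau H\bigl(\bar t_k,\bar x_k,D_x\phi(\bar t_k+\tau,\bar x_k,\bar p_k),\bar p_k\bigr)-o(\tau).
\]
Expanding the left-hand side as $-\tau\,\partial_t\phi(\bar t_k,\bar x_k,\bar p_k)+o(\tau)$, dividing by $\tau$ and letting $\tau\downarrow 0$ with $(\bar t_k,\bar x_k,\bar p_k)\to(\bar t,\bar x,\bar p)$ gives $\partial_t\phi+\tfrac12\mathrm{tr}(\sigma\sigma^*D_x^2\phi)+H(\bar t,\bar x,D_x\phi,\bar p)\le 0$ at $(\bar t,\bar x,\bar p)$, as required. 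When $\bar p\in\partial\Delta(I)$ the same computation applies once all $p$-derivatives and $\mathbf p_{k+1}$ are read inside the affine hull of the face of $\Delta(I)$ containing $\bar p$, i.e. the span of $T_{\Delta(I)}(\bar p)$. I expect the delicate point to be exactly this $p$-expansion: justifying that $\phi$ may be taken convex in $p$ and proving the quantitative bound $\mathbb E|\mathbf p_{k+1}-\bar p_k|^2=\mathcal O(\tau)$, which is what renders the otherwise large $p$-increment harmless.
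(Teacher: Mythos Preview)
Your proposal is correct and takes essentially the same approach as the paper: the one-step dynamic programming of Lemma~4.3 with the a~posteriori martingale $\mathbf p_{k+1}$, the monotonicity Lemma~3.2 to pass from $V^\tau$ to $\phi^\tau$, and the crucial bound $\mathbb E|\mathbf p_{k+1}-\bar p_k|^2=\mathcal O(\tau)$ extracted from the curvature hypothesis $\lambda_{\min}>0$. The paper organizes the argument into three labelled steps and is slightly more explicit at the one place you yourself flag as delicate: since $\phi^\tau$ is globally convex but has $\delta$-curvature only \emph{near} $\bar p_k$, the inequality $\mathbb E[\phi^\tau(\cdot,\mathbf p_{k+1})]\ge\phi^\tau(\cdot,\bar p_k)+\tfrac{\delta}{2}\,\mathbb E|\mathbf p_{k+1}-\bar p_k|^2$ is not immediate from Jensen; the paper's Step~1 supplies the missing local-to-global extension (via monotonicity of subgradients of a convex function along rays from $\bar p_k$, applied there to $V^\tau$ but working equally well for your globally convex $\phi^\tau$), after which your Young-inequality absorption and the paper's Cauchy--Schwarz route are equivalent ways to finish.
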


\begin{proof}
To show that $w(t,x,p)$ is a viscosity supersolution of (7) let $\phi:[0,T]\times\mathbb{R}\times\Delta(I)$ be a test function, such that $w-\phi$ has a strict global minimum at $(\bar{t},\bar x,\bar p)$ with $w(\bar{t},\bar x,\bar p)-\phi(\bar{t},\bar x,\bar p)=0$ and such that its derivatives are uniformly Lipschitz continuous in $p$.\\
We have to show, that
\begin{eqnarray}
	&&\min\bigg\{\frac{\partial \phi}{\partial t}+\frac{1}{2}\textnormal{tr}(\sigma\sigma^T(t,x)D_x^2\phi)+b(t,x)D_x\phi+H(t,x,D_x\phi,p), \lambda_{\min} \left(p,\frac{\partial^2 \phi}{\partial p^2}\right)\bigg\}\leq 0
\end{eqnarray}
holds at $(\bar{t},\bar x,\bar p)$. Observe that, if $\lambda_{\min} \left(\frac{\partial^2 \phi}{\partial p^2}\right)\leq0$ at $(\bar{t},\bar x,\bar p)$, then (25) follows immediately. So we assume now $\lambda_{\min} \left(\frac{\partial^2 \phi}{\partial p^2}\right)>0$.\\
By standard arguments (e.g. \cite{Bar}) there exists a sequence $(\bar{t}_k,\bar x_k,\bar p_k)_{k\in\mathbb{N}}$ such that $\bar t_{k}= l_k \tau \in\Pi^\tau$ converges to $\bar t$ and $(\bar x_k,\bar p_k)$ converge to $(\bar x,\bar p)$ and such that $V^\tau-\phi$ has a global minimum at $(\bar{t}_k,\bar x_k,\bar p_k)$.\\
Define $\phi^\tau=\phi+(V^\tau(\bar{t}_k,\bar x_k,\bar p_k)-\phi(\bar{t}_k,\bar x_k,\bar p_k))=\phi+\Delta_\tau$. Since the minimum is global, we have
\[V^\tau(\bar t_k+\tau,x,p)-\phi^\tau(\bar t_k+\tau,x,p)\geq V^\tau(\bar{t}_k,\bar x_k,\bar p_k)-\phi(\bar{t}_k,\bar x_k,\bar p_k)= 0.\]
Note that by the assumption $\lambda_{\min} \left(\frac{\partial^2 \phi}{\partial p^2}\right)>0$ there exists $\delta,\eta>0$ such that for all $k$ great enough we have
\begin{eqnarray}
\langle \frac{\partial^2 \phi^\tau}{\partial p^2}(t,x,p)z,z\rangle>4\delta|z|^2\ \ \ \ \ \ \forall (x,p)\in B_\eta(\bar x_k,\bar p_k),\ \  t\in[\bar t_k,\bar t_{k}+\tau],\ \ \ z\in T_{\Delta(I)(\bar p_k)}.
\end{eqnarray}
Since $\phi^\tau$ is a test function for a purely local viscosity notion, one can modify it outside a neighborhood of $(\bar t_k,\bar x_k,\bar p_k)$, such that for all $(s, x)\in [\bar t_k,T]\times\mathbb{R}^d$ the function $\phi^\tau(s,x,\cdot)$ is convex on the whole convex domain $\Delta(I)$. Thus for any $p\in\Delta(I)$ it holds
\begin{eqnarray}
V^\tau(s,x,p)\geq \phi^\tau(s,x,p)\geq \phi^\tau(s,x,\bar p_k)+\langle\frac{\partial \phi^\tau }{\partial p}(s,x,p),p-\bar p_k\rangle.
\end{eqnarray}

We proceed in several steps. 
\begin{itemize}
\item [(1)] First we show a local estimate which is stronger than (27) using (26).
\item  [(2)] In the second step we establish estimates for $\bold p_{k+1}:=\bold p_{l_k+1}^{\bar p_k,\bar x_k}$ where $\bold p_{l_k+1}^{\bar p_k,\bar x_k}$ is defined as one step martingale with initial data $(\bar t_k,\bar x_k,\bar p_k)$ as in Definition 4.2.
\item  [(3)] Then we use the estimates of the second step together with the monotonicity in Lemma 3.3. to conclude the viscosity supersolution property.
\end{itemize}

\textbf{Step 1}: 
We claim that there exist $\eta,\delta>0$, such that for all $\tau>0$ small enough (meaning $k$ great enough) it holds
\begin{eqnarray}
	V^\tau(\bar t_{k}+\tau,x,p)\geq\phi^\tau(\bar t_{k}+\tau,x,\bar p_k)+\langle\frac{\partial \phi^\tau }{\partial p}( \bar t_{k}+\tau ,x,\bar p_k),p-\bar p_k\rangle+\delta|{p}-\bar p_k|^2.
\end{eqnarray}
 for all $x\in B_\eta(\bar x_k)$, $p\in\Delta(I)$. By Taylor expansion in p
\begin{eqnarray}
 \phi^\tau(t,x,p)\geq \phi^\tau(t,x,\bar p_k)+\langle\frac{\partial \phi^\tau }{\partial p}(t,x,p),p-\bar p_k\rangle+2\delta|p-\bar p_k|^2
\end{eqnarray}
holds for $(x,p)\in B_\eta(\bar x_k,\bar p_k)$, $t\in[\bar t_k, \bar t_k+\tau]$.  Hence (28) is true locally in $p$. To establish (28) for all $p\in\Delta(I)$ we set for $p\in\Delta(I)\setminus \textnormal{Int}(B_\eta(\bar p_k))$
\[ \tilde p = \bar p_k+\frac{p-\bar p_k}{|p-\bar p_k|}\eta.\]

So by the convexity of $V^\tau$ in $p$ and (29) we have for a $\hat{ p}\in\partial {V^\tau}^-(\bar t_k,\bar x_k, \tilde p)$
\begin{eqnarray*}
V^\tau(\bar t_{k},\bar x_k,p) &\geq& V^\tau(\bar t_k,\bar x_k,\tilde p)+\langle \hat p,p-\tilde p \rangle\\
&\geq& \phi^\tau(\bar t_{k},\bar x_k,\bar p_k)+\langle\frac{\partial \phi^\tau }{\partial p}(\bar t_{k},\bar x_k,\bar p_k),\tilde p-\bar p_k\rangle+2\delta\eta^2+\langle \hat p,p-\tilde p \rangle\\
&\geq& \phi^\tau(\bar t_{k},\bar x_k,\bar p_k)+\langle\frac{\partial \phi^\tau }{\partial p}(\bar t_{k},\bar x_k,\bar p_k), p-\bar p_k\rangle+2\delta\eta^2+\langle \hat p-\frac{\partial \phi^\tau }{\partial p}(\bar t_{k},\bar x_k,\bar p_k),p-\tilde p \rangle.
\end{eqnarray*}
Since $\frac{\partial \phi^\tau }{\partial p}(\bar t_{k},\bar x_k,\bar p_k)\in\partial{V^\tau}^-(\bar t_k,\bar x_k, \bar p_k)$ and $p-\tilde p=c (p-\bar p_k)$ $(c>0)$ and $V^\tau$ is convex in $p$ it holds
\[\langle \hat p-\frac{\partial \phi^\tau }{\partial p}(\bar t_{k},\bar x_k,\bar p_k),p-\tilde p \rangle\geq0.\]
So we have for all $p\in\Delta(I)\setminus \textnormal{Int}(B_\eta(\bar p_k))$
\begin{eqnarray}
V^\tau(\bar t_{k},\bar x_k,p) \geq \phi^\tau(\bar t_{k},\bar x_k,\bar p_k)+\langle\frac{\partial \phi^\tau }{\partial p}(\bar t_{k},\bar x_k,\bar p_k), p-\bar p_k\rangle+2\delta\eta^2
\end{eqnarray}
which gives in the limit for all $p\in\Delta(I)\setminus \textnormal{Int}(B_\eta(\bar p))$
\begin{eqnarray}
w(\bar t,\bar x,p) \geq \phi(\bar t,\bar x,\bar p)+\langle\frac{\partial \phi }{\partial p}(\bar t,\bar x,\bar p), p-\bar p\rangle+2\delta\eta^2.
\end{eqnarray}
Assume now that (28) does not hold for a $p\in\Delta(I)$.
Hence there exists a sequence $(\tau,x_{k_n},p_{k_n})\rightarrow(0, 0, p)$ with $\tau=\frac{T}{n}$, ${p_{k_n}}\in\Delta(I)\setminus B_\eta(\bar p_{k_n})$,  such that
\begin{eqnarray*}
 &&V^\tau(\bar t_{k_n}+\tau,\bar x_{k_n}+x_{k_n},p_{k_n})\\
 &&\ \ \ \ \ \ < \phi^\tau(\bar t_{k_n}+\tau,\bar x_{k_n}+x_{k_n},\bar{p}_{k_n})+\langle\frac{\partial \phi^\tau}{\partial p}(\bar t_{k_n}+\tau,\bar x_{k_n}+x_{k_n},p_{k_n}),p_{k_n}-\bar p_{k_n}\rangle+\delta|p_{k_n}-\bar p_{k_n}|^2
\end{eqnarray*}
Thus for $n\rightarrow\infty$, $p\in\Delta(I)\setminus \textnormal{Int}(B_\eta(\bar p))$ and
\begin{eqnarray}
w(\bar t,\bar x,p)< \phi(\bar t,\bar x,\bar p)+\langle\frac{\partial \phi }{\partial p}(\bar t,\bar x,\bar p), p-\bar p\rangle+\delta\eta^2
\end{eqnarray}
which contradicts (31).\\

In the following we denote
\begin{eqnarray*}
\bar X_{k+1}=\bar x_k+\sigma(\bar t_k,\bar x_k) \Delta B^{l_k}.
\end{eqnarray*}
where $\Delta B^{l_k}= B_{\bar t_k+\tau}-B_{\bar t_k}$.
With the estimate (28) we have for $\tau$ small enough for all $p\in\Delta(I)$
\begin{eqnarray*}
&&\mathbb E\left[V^\tau(\bar t_{k}+\tau,\bar X_{k+1},p)\right]\\
&&=\mathbb E\left[V^\tau(\bar t_{k}+\tau ,\bar X_{k+1},p)1_{|\bar X_{k+1}-\bar x_k|< \eta}\right]+\mathbb{E}\left[V^\tau(\bar t_{k}+\tau,\bar X_{k+1},p)1_{|\bar X_{k+1}-\bar x_k|\geq \eta}\right]\\
&&\geq\mathbb E\left[\left(\phi^\tau(\bar t_{k}+\tau,\bar X_{k+1},\bar p_k)+\langle \frac{\partial}{\partial p} \phi^\tau(\bar t_{k}+\tau,\bar X_{k+1},\bar p_k),p-\bar p_k\rangle+\delta |p-\bar p_k|^2\right)1_{|\bar X_{k+1}-\bar x_k|< \eta}\right]\\
&&\ \ \ \ \ \ +\mathbb{E}\left[\phi^\tau(\bar t_{k}+\tau,\bar X_{k+1},p)1_{|\bar X_{k+1}-\bar x_k|\geq \eta}\right]\\
&&=\mathbb E\left[\phi^\tau(\bar t_{k}+\tau,\bar X_{k+1},\bar p_k)+\langle \frac{\partial}{\partial p} \phi^\tau(\bar t_{k}+\tau,\bar X_{k+1},\bar p_k),p-\bar p_k\rangle+\delta 1_{|\bar X_{k+1}-\bar x_k|< \eta} |p-\bar p_k|^2)\right]\\
&&\ \ \ \ \ \ +\mathbb{E}\bigg[1_{|\bar X_{k+1}-\bar x_k|\geq\eta}\bigg(\phi^\tau(\bar t_{k}+\tau,\bar X_{k+1},p)-\phi^\tau(\bar t_{k}+\tau,\bar X_{k+1},\bar p_k) \\
&&\ \ \ \ \ \ \ \ \ \ \ \ \ \ \ \ \ \ \ \ \ \ \ \ \ \ \ \ \ \ \ \ \ \ \ \ \ \ \ \ \ \ \ \ \ \ \ -\langle \frac{\partial}{\partial p} \phi^\tau(\bar t_{k}+\tau,\bar X_{k+1},\bar p_k),p-\bar p_k\rangle\bigg)\bigg].
\end{eqnarray*}
Recalling that $\phi^\tau$ is convex with respect to $p$, we get for all $p\in\Delta(I)$
\begin{equation}
\begin{array}{rcl}
\mathbb E\left[V^\tau(\bar t_{k}+\tau,\bar X_{k+1},p)\right]&\geq&\mathbb E\bigg[\phi^\tau(\bar t_{k}+\tau,\bar X_{k+1},\bar p_k)+\langle \frac{\partial}{\partial p} \phi^\tau(\bar t_{k}+\tau,\bar X_{k+1},\bar p_k),p-\bar p_k\rangle\\
&&\ \ \ \ \ \ \ \ \ \ \ \ \ \ \ \ \ \ \ \ \ +\delta 1_{|\bar X_{k+1}-\bar x_k|< \eta} |p-\bar p_k|^2)\bigg].
\end{array}
\end{equation}

\textbf{Step 2}: 
Next we establish an estimate for  $\bold p_{k+1}:=\bold p_{l_k+1}^{\bar p_k,\bar x_k}$ where $\bold p_{l_k+1}^{\bar p_k,\bar x_k}$ is defined as one step martingale as in Definition 4.2. with initial data $(\bar t_k,\bar x_k,\bar p_k)$.\\
Note that by the one step dynamic programming (21) it holds
\begin{eqnarray}
V^\tau(\bar t_k,\bar x_{k}, \bar p_k)
=\mathbb{E}\left[ V^\tau(\bar t_{k}+\tau,\bar X_{k+1},\bold p_{k+1})+\tau H(\bar t_{k},\bar x_k,\bar z_{k}(\bar x_k,\bold p_{k+1}),\bold p_{k+1})\right].
\end{eqnarray}
Together with $V^\tau(\bar t_k,\bar x_{k}, \bar p_k)=\phi^\tau(\bar t_k,\bar x_{k}, \bar p_k)$ and the estimate (33) we have for all small enough $\tau>0$
\begin{eqnarray*}
\phi^\tau(\bar t_k,\bar x_{k}, \bar p_k)
&\geq&\mathbb{E}\Bigg[ \phi^\tau(\bar t_{k}+\tau,\bar X_{k+1},\bar p_{k})+\tau H(\bar t_{k},\bar x_k,\bar z_{k}(\bar x_k,\bold p_{k+1}),\bold p_{k+1})\\
&&\ \ \ \ \ \ \ +\langle \frac{\partial}{\partial p}  \phi^\tau(\bar t_{k}+\tau,\bar X_{k+1},\bar p_k),\bold p_{k+1}-\bar p_k\rangle+\delta  1_{|\bar X_{k+1}-\bar x_k|< \eta}  |\bar p_k-\bold p_{k+1}|^2\Bigg].
\end{eqnarray*}
Since $\bold p_{k+1}$ and $\Delta B^{l_k}$ are independent, $\phi^\tau$ has bounded derivatives and $\bold p_{k+1}$ is a one step martingale, it holds
\begin{eqnarray*}
&&\mathbb{E}\left[ \langle \frac{\partial}{\partial p}  \phi^\tau(\bar t_{k}+\tau,\bar X_{k+1},\bar p_k),\bold p_{k+1}-\bar p_k\rangle\right]\\
&&\ \ \ \ =\mathbb{E}\left[ \langle \frac{\partial}{\partial p}  \phi^\tau(\bar t_{k}+\tau,\bar x_k+\sigma(\bar t_k,\bar x_k) \Delta B^{l_k},\bar p_k),\bold p_{k+1}-\bar p_k\rangle\right]=0.
\end{eqnarray*}
Furthermore by the Markovian inequality and assumption (A) we have
\begin{eqnarray*}
&&\mathbb{E}\left[1_{|\bar X_{k+1}-\bar x_k|< \eta} |\bold p_{k+1}-\bar p_k|^2\right]\\
&&\ \ \ \ =\mathbb{E}\left[1_{|\sigma(\bar t_k,\bar x_k) \Delta B^{l_k}|< \eta}|\bold p_{k+1}-\bar p_k|^2\right]\geq c (1-\tau^\frac{1}{2}) \mathbb{E}\left[|\bold p_{k+1}-\bar p_k|^2\right].
\end{eqnarray*}
with a sufficiently small constant $c$ independent of $k$. Thus
\begin{equation}
\begin{array}{rcl}
0&\geq&\mathbb{E}\Bigg[ \phi^\tau(\bar t_{k}+\tau,\bar X_{k+1},\bar p_k)-\phi^\tau(\bar t_k,\bar x_{k}, \bar p_k)+\tau H(\bar t_{k},\bar x_k,\bar z_{k}(\bar x_k,\bold p_{k+1}),\bold p_{k+1})\\
&& \ \ \ \ \ \ \ \ +c \delta (1-\tau^\frac{1}{2})|\bold p_{k+1}-\bar p_k|^2\Bigg].
\end{array}
\end{equation}
Since $\phi^\tau$ has bounded derivatives it holds by assumption (A)
\begin{eqnarray}
\left|\mathbb{E}\left[\phi^\tau(\bar t_{k}+\tau,\bar X_{k+1},\bar p_k)-\phi^\tau(\bar t_k,\bar x_{k}, \bar p_k)\right]\right|&\leq& c\tau
\end{eqnarray}
and since $\mathbb{E}\left[|\bar z_{k}(\bar x_k,\bold p_{k+1})|\right]\leq c$ (16) it holds by (A) and  H\"older
\begin{equation}
\begin{array}{rcl}
\mathbb{E}\left[\tau H(\bar t_{k},\bar x_k,\bar z_{k}(\bar x_k,\bold p_{k+1}),\bold p_{k+1})\right]&\leq& c \tau.
\end{array}
\end{equation}
Combining (35)-(37) we have for small enough $\tau>0$  and a generic constant $c'>0$
\begin{eqnarray*}
\mathbb{E}[|\bold p_{k+1}-\bar p_k|^2]\leq \frac{c'}{c \delta(1-\tau^\frac{1}{2})}\tau,
\end{eqnarray*}
hence for small enough $\tau$ and a constant $c''>0$
\begin{eqnarray}
\mathbb{E}[|\bold p_{k+1}-\bar p_k|^2]\leq c''\tau.
\end{eqnarray}

\textbf{Step 3}: 

Furthermore we have with (35) and the monotonicity Lemma 3.3., since $V^\tau(\bar t_k,\bar x_{k}, \bar p_k)=\phi^\tau(\bar t_k,\bar x_{k}, \bar p_k)$
\begin{eqnarray}
0
&\geq&\mathbb{E}\bigg[ \phi^\tau(\bar t_{k}+\tau,\bar X_{k+1},\bold p_{k+1})-\phi^\tau(\bar t_k,\bar x_{k}, \bar p_k)+\tau H(\bar t_{k},\bar x_k,\tilde z_{k}(\bar x_k,\bold p_{k+1}),\bold p_{k+1})\bigg]-\tau \mathcal{O}(\tau),
\end{eqnarray}
where
\begin{eqnarray*}
\tilde z_{k}(\bar x_k,\bold p_{k+1})=\frac{1}{\tau}\mathbb{E}\left[\phi^{\tau}(\bar t_{k}+\tau,\bar X_{k+1},p) (\sigma^*)^{-1}(\bar t_k,\bar x_k)\Delta B^{l_k}\right]\big|_{p=\bold p_{k+1}}.
\end{eqnarray*}

From the construction of $\bold p_{k+1}$ and the fact that $\phi^\tau$  is convex it holds with (27)
\begin{equation}
\begin{array}{rcl}
\mathbb{E}\left[ \phi^\tau(\bar t_{k}+\tau,\bar X_{k+1},\bold p_{k+1})\right]
&\geq&\mathbb E\left[\phi^\tau(\bar t_{k}+\tau,\bar X_{k+1},\bar p_k)+\langle \frac{\partial}{\partial p} \phi^\tau(\bar t_{k}+\tau,\bar X_{k+1},\bar p_k),\bold p_{k+1}-\bar p_k\rangle\right]\\
\ \\
&=&\mathbb E\left[\phi^\tau(\bar t_{k}+\tau,\bar X_{k+1},\bar p_k)\right].
\end{array}
\end{equation}

It remains to get a suitable estimate for $\tilde z_{k}(\bar x_k,\bold p_{k+1})$. Since $\phi^{\tau}$ is uniformly Lipschitz continuous in $x$, it holds by Taylor expansion in $x$
\begin{eqnarray*}
\tilde z_{k}(\bar x_k,\bold p_{k+1})&=&\frac{1}{\tau}\mathbb{E}\left[\phi^{\tau}(\bar t_{k}+\tau,\bar X_{{k+1}}, p) (\sigma^*)^{-1}(\bar t_k,\bar x_k)\Delta B^{l_k}\right]\big|_{p=\bold p_{k+1}}\\
&=&\frac{1}{\tau}\mathbb{E}\left[\phi^{\tau}(\bar t_{k}+\tau,\bar x_k,p) (\sigma^*)^{-1}(\bar t_k,\bar x_k)\Delta B^{l_k}\right]\big|_{p=\bold p_{k+1}}\\
&&\ \ \ \ \  +\frac{1}{\tau}\mathbb{E}\left[D_x \phi^{\tau}(\bar t_{k}+\tau,\bar x_k,p)|\Delta B^{l_k}|^2\right]\big|_{p=\bold p_{k+1}}+ \mathcal{O}(\tau)\\
&=&\frac{1}{\tau}\mathbb{E}\left[D_x \phi^{\tau}(\bar t_{k}+\tau,\bar x_k,p)|\Delta B^{l_k}|^2\right]\big|_{p=\bold p_{k+1}}+ \mathcal{O}(\tau).
\end{eqnarray*}
Furthermore since $D_x\phi^{\tau}$ is Lipschitz continuous in $p$ it holds with (38)
\begin{eqnarray*}
&&\mathbb{E}\left[\left|D_x \phi^{\tau}(\bar t_{k}+\tau,\bar x_k,\bold p_{k+1})|\Delta B^{l_k}|^2-D_x \phi^{\tau}(\bar t_{k}+\tau,\bar x_k,\bar p_{k})|\Delta B^{l_k}|^2\right|\right]\\
&&\ \ \ \ \ \ \ \ \ \ \ \ \leq c\mathbb{E}\left[|\bold p_{k+1}-p_k||\Delta B^{l_k}|^2\right]\leq c\tau^\frac{3}{2}
\end{eqnarray*}
So from (40) we have
\begin{eqnarray*}
0&\geq& \mathbb{E}\Bigg[ \phi^\tau(\bar t_{k}+\tau,\bar X_{k+1},\bar p_k)-\phi^\tau(\bar t_k,\bar x_{k}, \bar p_k)+\tau H(\bar t_{k},\bar x_k, D_x \phi^{\tau}(\bar t_{k}+\tau,\bar x_k,\bar p_{k}) ,\bar p_{k})\Bigg]\\
&&\ \ \ \ -c({\tau}^\frac{3}{2}+\tau \mathcal{O}(\tau))
\end{eqnarray*}
which implies (25) since $\phi^\tau$ is equal to $\phi$ up to a linear shift.

\end{proof}

\section{Concluding Remarks and Outlook}

In this paper we gave an approximation scheme for the value function of a stochastic differential game with incomplete information. It is natural to ask whether this approximation might be used to determine optimal feedback strategies for the informed player. In the deterministic games with complete information it is well known that the answer is positive (see the step by step motions associated with feedbacks in \cite{K}). The case of deterministic games with incomplete information has been treated in \cite{Carda}.\\
The approximation of optimal strategies for stochastic differential games is a more delicate topic even in the case with complete information.  \cite{Ba} - also considering the game under a Girsanov transform -  gives a partwise answer under a weak Lipschitz assumption of the feedback control. The result is shown by using approximations of BSDEs however not in a completely discrete framework. In the very recent paper \cite{FH} approximately Markov strategies are constructed with an approximation that in contrast to ours takes into account the actions of the other player during the time intervals. This however makes the approximation much harder to implement.\\
In fact, if we use the approximation for the construction of optimal strategies for the informed player we are in the same situation as \cite{Ku}. For the approximation of the value function in \cite{Ku} nearly optimal policies are constructed which possess a certain optimality in the approximative discrete time games instead of the continuous time one. To the authors knowledge the problem of finding an efficient approximation of optimal strategies in stochastic differential games (with or without incomplete information) is open and poses an interesting problem for further research.


\end{document}